\documentclass[aps,prx,twocolumn,showpacs,superscriptaddress,preprintnumbers,longbibliography]{revtex4-2} 
\usepackage{dcolumn}   
\usepackage{bbm}        
\usepackage{amssymb}   
\usepackage{amsmath}

\usepackage[colorlinks=true,citecolor=blue,linkcolor=red]{hyperref}
\usepackage{cleveref}
\usepackage[normalem]{ulem}
\usepackage{comment,braket,physics}
\usepackage{color,xcolor}
\usepackage{slashed}
\usepackage{braket}
\usepackage{amssymb,amsmath,color}
\usepackage[caption=false]{subfig}
\usepackage{graphicx}  
\usepackage[export]{adjustbox}
\usepackage{multirow}
\usepackage{rotating,booktabs}
\usepackage[verbose]{placeins}
\usepackage{mathrsfs}
\usepackage{stackengine}
\usepackage{soul}
\usepackage{float}

\newcommand{\Eq}[1]{Eq.~(\ref{#1})}

\newcommand{\Eqs}[2]{Eqs.~(\ref{#1}-\ref{#2})}
\newcommand{\Fig}[1]{Fig.~\ref{#1}}

\renewcommand{\mel}[3]{\langle #1 | #2 | #3 \rangle}
\newcommand{\CC}{{\mathbb{C}}}
\newcommand{\EE}{{\mathbb{E}}}
\newcommand{\RR}{{\mathbb{R}}}

\newcommand{\poly}{{\mathrm{poly}}}

\newtheorem{lemma}{Lemma}
\crefname{equation}{Eq.}{Eqs.}
\Crefname{equation}{Equation}{Equations}

\crefname{figure}{Fig.}{Figs.}
\Crefname{figure}{Figure}{Figures}

\crefname{table}{Tab.}{Tabs.}
\Crefname{table}{Table}{Tables}

\crefname{appendix}{App.}{Apps.}
\Crefname{appendix}{Appendix}{Appendices}

\makeatletter
\def\blfootnote{\xdef\@thefnmark{}\@footnotetext}
\makeatother
\begin{document}
\preprint{IQuS@UW-21-129}

\title{Benchmarking quantum simulation at scale}

\author{Jeremy Hartse}
\thanks{These authors contributed equally to this work (alphabetical order). \url{jhartse@uw.edu}, \url{mraza98@unm.edu}, \url{shravan@unm.edu}}
\affiliation{InQubator for Quantum Simulation (IQuS), Department of Physics, University of Washington, Seattle, WA 98195, USA.}

\author{Mohsin Raza}
\thanks{These authors contributed equally to this work (alphabetical order). \url{jhartse@uw.edu}, \url{mraza98@unm.edu}, \url{shravan@unm.edu}}
\affiliation{Center for Quantum Information and Control, University of New Mexico, Albuquerque, NM 87106, USA}
\affiliation{Department of Physics and Astronomy, University of New Mexico, Albuquerque, NM 87106, USA}

\author{Shravan Shravan}
\thanks{These authors contributed equally to this work (alphabetical order). \url{jhartse@uw.edu}, \url{mraza98@unm.edu}, \url{shravan@unm.edu}}
\affiliation{Center for Quantum Information and Control, University of New Mexico, Albuquerque, NM 87106, USA}
\affiliation{Department of Physics and Astronomy, University of New Mexico, Albuquerque, NM 87106, USA}

\author{Ivan H. Deutsch}
\affiliation{Center for Quantum Information and Control, University of New Mexico, Albuquerque, NM 87106, USA}
\affiliation{Department of Physics and Astronomy, University of New Mexico, Albuquerque, NM 87106, USA}

\author{Niklas Mueller}
\affiliation{Center for Quantum Information and Control, University of New Mexico, Albuquerque, NM 87106, USA}
\affiliation{Department of Physics and Astronomy, University of New Mexico, Albuquerque, NM 87106, USA}

\begin{abstract}
The applications for which quantum computers will clearly outperform classical computers are still being identified and benchmarking such an advantage is challenging. We propose a scalable verification scheme for non-equilibrium quantum simulation based on stabilizer scars, a special class of quantum many-body scars, whose structure ensures both classical simulability and efficient direct fidelity estimation. Assuming a physically motivated error model, we show that the fidelity of quantum simulating these states bounds the fidelity of classically intractable simulations, providing a benchmark for quantum-advantage experiments in non-equilibrium dynamics. 
\end{abstract}
\maketitle

{\em Introduction.}
The motivation for quantum computers is to address problems whose complexity lies beyond the reach of classical methods~\cite{preskill2012quantum,aaronson2016complexity,harrow2017quantum,deutsch2020harnessing}. Target areas include cryptography~\cite{shor1994algorithms}, optimization~\cite{abbas2024challenges}, sampling~\cite{shepherd2009temporally,aaronson2011computational,bremner2017achieving,boixo2018characterizing}, and quantum simulation  ranging from chemistry and materials to high energy and nuclear physics~\cite{feynman2018simulating,lloyd1996universal,cirac2012goals,gross2017quantum,beck2019nuclear,cao2019quantum,altman2021quantum,bauer2023quantum,bauer2023quantuma}.

Quantum simulation offers a compelling route to quantum advantage, yet establishing it rigorously remains challenging~\cite{kitaev2002classical,hangleiter2017direct,fitzsimons2018post,elben2020cross,eisert2020quantum,carrasco2021theoretical,mills2021application,kliesch2021theory,aaronson2024verifiable,proctor2025benchmarking}.  Apart from experimental challenges, numerous classical algorithms already efficiently describe properties of many quantum systems~\cite{kolb2004hydrodynamic,schollwock2011density,cirac2009renormalization,probert2011electronic}. Moreover, noise can render quantum problems classically simulable~\cite{aharonov2000quantum,harrow2003robustness,flannigan2022propagation,schuster2025polynomial,angrisani2025simulating,shravan2026efficientsimulationnoisyiqp} and the issue is not only whether a computation is correct, but \textit{how} correct it is.  Finally, verifying a classically infeasible computation  may be as difficult as simulating the computation itself. Verifying the correctness and advantage in analog quantum simulators is an even greater challenge~\cite{trivedi2024quantum,liu2025efficiently}.

Early quantum supremacy demonstrations have been based on the computational complexity of sampling from certain hard distributions~\cite{arute2019quantum,decross2025computational,gao2025establishing,morvan2024phase,wu2021strong}.  Benchmarking such {unstructured problems}, however, is challenging: the very property that makes classical emulation difficult also renders verification exponentially costly. Problem agnostic benchmarks may be difficult to scale~\cite{emerson2005scalable,knill2008randomized,proctor2019direct,cross2019validating,nielsen2021gate,helsen2022general,BlumeKohout2020volumetricframework}. In contrast, we argue that quantum simulation applications could provide extra structure for verification.

We propose a method for verifying large-scale quantum simulations of nonequilibrium dynamics,  aiming at the near-term regime where a broad quantum advantage may become achievable. We make use of models hosting quantum many-body scars (QMBS)~\cite{bernien2017probing,turner2018weak,serbyn2021quantum,moudgalya2022quantum,chandran2023quantum},  related to lattice gauge theory (LGT) models, which are key in many domains~\cite{halimeh2025quantum,fradkin2013field,kleinert1989gauge,gottesman2010introduction,sarma2006topological,nayak2008non}. 

The classical simulability of QMBS is not a new idea~\cite{bernien2017probing,turner2018weak}. Our approach leverages the structure of recently proposed \textit{stabilizer scars}~\cite{hartse2025stabilizer,dooley2026parent,hokkyo2026exact,gupta2026exact}, enabling not only efficient classical computation, but also guaranteeing efficient benchmarking that constrains the fidelity when simulating non-classically simulable, non-scar dynamics. 

{\em QMBS-based benchmarking.} We consider Trotterized time evolution of a model with Hamiltonian $H$, represented by a quantum circuit $U$, and an input state $\ket{\psi(0)}$. The task is to verify the fidelity of the output of a circuit $\sigma = \mathcal{E}_{U} \left( \ketbra{\psi(0)}{\psi(0)} \right) $ computed on a device, relative to an (ideal) target state $\rho  = U \ketbra{\psi(0)}{\psi(0)}U^{\dagger}$, where $\mathcal{E}_{U}$ represents the  noisy version of the ideal unitary circuit $U$. The ideal circuit is neither a Clifford nor otherwise special circuit and is assumed, though not proven for the present application, to be classically hard for a generic input.

Our protocol is compactly summarized in~\Fig{fig:overview} with the following properties:
(i)~It is applied to models that host an exact, polynomially large QMBS subspace, spanned by stabilizer states, embedded in an exponentially large Hilbert space of typical states, with dynamics in the QMBS subspace classically tractable. (ii)~The fidelity $F(\rho, \sigma) $ of any state simulated within the QMBS subspace relative to the target $ \rho$ can be efficiently measured experimentally using direct fidelity estimation~\cite{flammia2011direct,da2011practical}, with both the number of shots and observables scaling polynomially with system size. (iii)~Typical (non-scar) inputs cannot be classically simulated or have their fidelity directly efficiently measured, however, with local depolarization noise, we show their fidelity  is nonetheless  as good as that of the scar states.

\begin{figure}[t]
    \centering
    \includegraphics[width=0.93\linewidth]{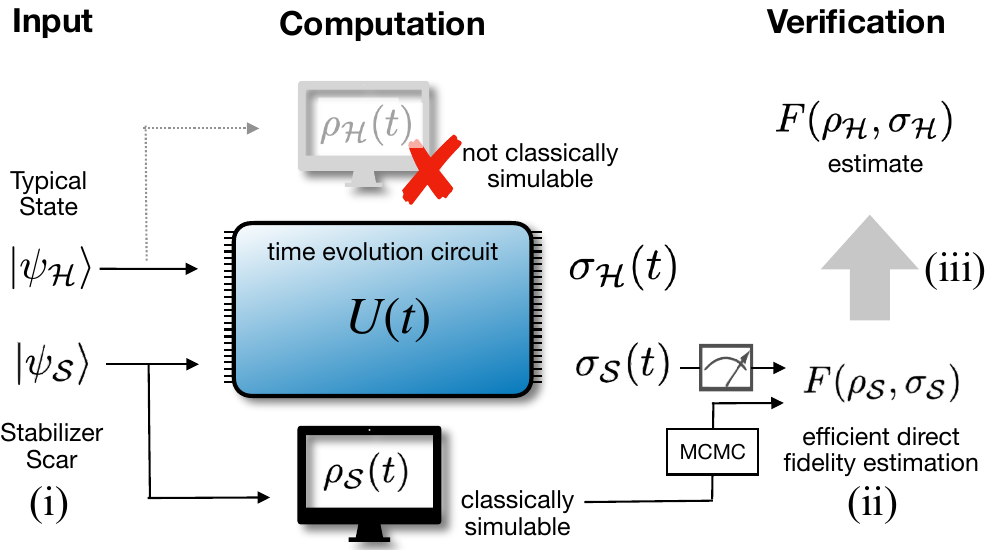}
    \caption{\textit{Benchmarking Quantum Simulations at Scale.}  A protocol for benchmarking the fidelity of quantum simulations of non-equilibrium dynamics in regimes beyond classical simulability. We consider models which are non-integrable, for which typical states, $|\psi_\mathcal{H}\rangle$, thermalize.
    Such dynamics are not efficiently classically simulable, and benchmarking performance for such states is  intractable.  For models that host stabilizer scars, under consideration here, we can exploit the fact that for initial states in this subspace, $|\psi_\mathcal{S}\rangle$ dynamics are classically simulable and we can perform direct fidelity estimation  efficiently. 
    Assuming local depolarizing noise, we demonstrate for a specific stabilizer scar model~\cite{hartse2025stabilizer} that the fidelity of classically non-simulable states is approximately equal to that of the states restricted to the scar subspace, thus providing a method to verify performance for arbitrary inputs.  
    \label{fig:overview}}
\end{figure}

{\em Efficient direct fidelity estimation in a stabilizer scar subspace.}  To prove (ii), we adapt the direct fidelity estimation (DFE) protocol of Ref.~\cite{flammia2011direct}.  The fidelity between  a pure target state $\rho$ and its noisy experimental realization $\sigma$ is
\begin{equation}\label{eq:def_fidelity}
    F( \rho, \sigma) = \text{Tr}(\rho \sigma) = \sum_{k = 1}^{d^{2}} \chi_{\rho}(k) \chi_{\sigma}(k),
\end{equation}
$\chi_{\rho}(k) = \text{Tr}[\rho W_{k}]/\sqrt{d}$, $d=2^n$ for $n$ qubits, and $W_{k}\in \mathcal{P}_{n} = \{\mathbb{I},X,Y,Z \}^{\otimes n}$ is a $n$-qubit Pauli observable; $X,Y,Z$ are Pauli matrices.  To estimate \Eq{eq:def_fidelity}, one draws, with probability $P_{\rho}(k)= \chi_{\rho}^2(k) $, samples $\{ k_i \}_{i=1}^s$
at random. One measures the estimator $\chi_\sigma(k_i)/\chi_\rho(k_i)$ where, in our case, $\chi_\sigma(k_i)$ is obtained from a Pauli measurement from experiment and $\chi_\rho(k_i)$ classically computed. The average of this estimator is guaranteed to be $\epsilon$-close to the true fidelity with probability $1-\delta$ if $s=\lceil 1/(\epsilon^{2} \delta ) \rceil$ if the expectation value $\chi_\sigma(k_i)$ is known exactly~\cite{flammia2011direct}. In a quantum simulation, however, $\chi_\sigma(k_i)$ is estimated from $m_i$ copies of the state, requiring 
\begin{align}\label{eq:boundDFE}
    m_{i} = \left\lceil \frac{2\log(2/\delta)}{|\mathrm{Tr}[\rho W_{k_{i}}]|^2 \, s \epsilon^{2} }\right\rceil\,.
\end{align} 
\Eq{eq:boundDFE} ultimately controls the efficiency of the protocol. 
For a typical state, the situation is hopeless, since $|\mathrm{Tr}[\rho W_{k_{i}}]|^2$ is exponentially small in the system size for all but very special observables. 

Consider now (i), a QMBS subspace of dimension $d_s$, polynomial in qubit number $n$, spanned by an orthonormal set of stabilizer basis states, $\{ \varphi_{\alpha}\} $. Let  $\rho \in \mathcal{S}$, where $\mathcal{S} = \text{span}(\{ \varphi_{\alpha}\})$, be pure. DFE is efficient for an arbitrary superposition $\rho\in \mathcal{S}$. To see this, we analyze the distribution of $|\mathrm{tr}[\rho W_{k_{i}}]|^2$ for such states by considering the stabilizer  R\'enyi entropy (SRE)~\cite{leone2022stabilizer} 
\begin{equation}\label{eq:def_stab_renyi_entropy}
    M_{\alpha} \equiv \frac{1}{1-\alpha} \log(\xi_\alpha)\,, \quad \xi_\alpha \equiv \frac{1}{d}\sum_{W_p\in \mathcal{P}_{n}} \text{Tr}[\rho W_{p}]^{2\alpha}\,.
\end{equation}
Exploiting a result of Haug and Piroli~\cite{haug2023stabilizer} [their Eq.~(11)] which  bounds the stabilizer fidelity using SREs, and as we show in SM, the stabilizer fidelity of the worst $\rho \in \mathcal{S}$ is lower bounded by $1/d_s$, we find that the $\alpha$-SRE
\begin{equation}\label{eq:stab_renyi_entropy_up_bound}
    M_{\alpha} \leq \frac{2\,\alpha}{\alpha-1}\log(d_{s}),\,  \quad \forall \alpha >1.
\end{equation}
 For a pure state $\rho$, \Eq{eq:def_stab_renyi_entropy} specifies a probability distribution, $P_\rho(k_i) $.  \Eq{eq:stab_renyi_entropy_up_bound}  implies a \textit{polynomial} lower bound on the average of this distribution,
\begin{align} \label{eq:lowerboundonexpenctationvaluemain}
& \xi_{2} =\underset{k_{i}\sim P_{\rho}}{\mathbb{E}}[\Tr[\rho W_{k_{i}}]^{2}] \geq \frac{1}{d_{s}^{4}}\,.
\end{align}
Since $\Tr[\rho W_{k_{i}}]^{2}$ is a bounded random-variable, it is sub-gaussian by Hoeffding's lemma, therefore, it has an exponential tail-bound:
\begin{equation}\label{eq:tail_bound}
    \text{Pr}[|\Tr[\rho W_{k_{i}}]^{2} - \underset{k_{i}\sim P_{\rho}}{\mathbb{E}}[\Tr[\rho W_{k_{i}}]^{2}]| \geq \gamma]\leq 2 e^{ -2\gamma^{2}},\,
\end{equation}
where $\gamma >0$  ~\cite{boucheron2013concentration}, thus  the distribution of Pauli expectations is concentrated around the inverse polynomial lower-bound; see Supplementary Material (SM) \ref{App:efficient_fidelity_estimation}.

A concrete bound on the total number of copies, $
N_{\rho} =\sum_{i=1}^{s}m_{i}$ can be obtained following Leone \textit{et al.}~\cite{leone2023nonstabilizerness}. Using an $\epsilon$-truncation of the state~\cite{flammia2011direct}, they derive an upper bound on $N_{\rho}$ in terms of the $0$-SRE. In $\mathcal{S}$, $M_0$ scales only as $\log(n^2)$ (more precisely, the set of nonzero expectation values is $d(n^2-n+2)$ for the model considered below). Consequently,
\begin{align}\label{eq:leonebound}
    N_{\rho}^{\text{\cite{leone2023nonstabilizerness}}} \leq \frac{16}{\epsilon^4}\log(2/\delta) \, \mathcal{O}(d_{s}^2)\,.
\end{align}
However, in practice no such $\epsilon$-truncation needs to be introduced: An observable with exponentially small expectation value may occur but, because of \Eqs{eq:lowerboundonexpenctationvaluemain}{eq:tail_bound}, a Markov Chain Monte Carlo (MCMC) algorithm for DFE (like the Metropolis-Hastings algorithm that we employ below)  generating samples from $P_\rho$ would accept those only with exponentially vanishing probability.
Regardless of the argumentation, DFE is \textit{efficient}: The fidelity between a state $\sigma$ and an ideal pure state $\rho$ can be efficiently estimated if $\rho\in \mathcal{S}$, the stabilizer scar subspace. 

{\em Fidelity estimation for classically non-simulable states.} 
We now argue~(iii), that under suitable assumptions on the noise, the fidelity estimate of a state in $\mathcal{S}$ also serves as a benchmark for the fidelity of non-classically computable non-scar states quantum computed with the same circuit.
We assume, for simplicity, that the dominant error  is approximated by local depolarizing noise with probability $p$, i.e., $\mathcal{E}(\rho) = \bigotimes_{i=1}^n \mathcal{E}^{(i)}(\rho) $ with  $\mathcal{E}^{(i)}(\rho)= (1-p)\rho + p\frac{\mathbb{I}}{2}$ 
 and consider the {\em average} fidelity, 
\begin{align}\label{eq:avfidel}
  F(\mu)\equiv   \mathbb{E}_{|\psi\rangle \sim \mu} \left[ \langle \psi | \mathcal{E}\big(|\psi\rangle \langle \psi|\big) | \psi \rangle \right],
\end{align}
where the average is over input states drawn  from a given ensemble $\mu$, either  the Haar measure over the full Hilbert space of $n$ qubits, $|\psi\rangle \sim \mu_{\mathcal{H}}$, or the uniform measure over the stabilizer scar subspace, $\ket{\psi}\sim \mu_{\mathcal{S}}$, with dimension $d_s=\poly(n)$. In the full Hilbert space, 
\begin{align}\label{eq:fidscar}
  F(\mu_{\mathcal{H}})= \frac{1}{1+1/d}\left[\frac{1}{d}+\left(1-\frac{3p}{4} \right)^n \right]\approx\left(1-\frac{3p}{4} \right)^n 
\end{align}
where, in the second relation, exponentially suppressed terms proportional to inverse powers of $d$ have been neglected for $p<1$ [$F(\mu_{\mathcal{H}})=1/d$ at $p=1$]; see SM~\ref{supp:fidelity} for details. 
In contrast, for input states uniformly drawn from the scar subspace,
\begin{align}\label{eq:scarsubspacefid}
 F(\mu_{\mathcal{S}})=\frac{1}{d_s(d_s+1)} \sum_{i=1}^{d^2}\left[ \text{Tr}[K_i^\mathcal{S} K_i^{\mathcal{S}\dagger}] + |\text{Tr}[K_i^\mathcal{S} ]|^2 \right],
\end{align}
where $K_i^\mathcal{S} \equiv \Pi_{\mathcal{S}} K_i \Pi_{\mathcal{S}}$ are the Kraus operators of the channel,  proportional to Pauli operators $W_i$, i.e., $K_i \in \{{(1-\frac{3p}{4})^{\frac{1}{2}}} \mathbb{I}, {(\frac{p}{4})^{\frac{1}{2}}}X, {(\frac{p}{4})^{\frac{1}{2}}}Y, {(\frac{p}{4})^{\frac{1}{2}}}Z \}^{\otimes n}$,  projected onto $\mathcal{S}$ via  $\Pi_{\mathcal{S}}$.
We now derive a quantitative relation between the fidelity in the scar subspace, \Eq{eq:scarsubspacefid}, and in the full Hilbert space, \Eq{eq:fidscar}. 

{\em Prototype model.} 
With knowledge of error model and the structure of the scar subspace, \Eq{eq:scarsubspacefid} can be tightly lower and upper bounded. The following   Ising model on $n=2L$ qubits posses a stabilizer scar subspace~\cite{hartse2025stabilizer}, see SM~\ref{supp:model}, and satisfies both (i) and (ii),
\begin{align}\label{eq:Hamiltonian}
    H=&-\sum_{p}X_p - g \sum_{p,{a}={1},{2}}  Z_{p}Z_{p+\hat{a}}\,,
\end{align}
where $p=(p_1,p_2)$ defined on a lattice with size $L\times 2$ and periodic boundary conditions (PBC), and subject to a parity constraint, $\prod_{p}X_p=1$ resulting in Hilbert space dimension $d = 2^{2L-1}\approx 2^{2L}$; $X_p$ and $Z_p$ are Pauli matrices. For odd $L$, the model hosts a QMBS subspace of dimension $d_s=2n=4L$, spanned by an orthonormal set of stabilizer basis states, $\varphi_{\alpha,k} = | \varphi_{\alpha,k} \rangle \langle \varphi_{\alpha,k} | $,
\begin{align}\label{eq:ss1}
    \varphi_{\alpha,k}\equiv \prod_{q=1}^L \frac{1+s_{q,\alpha}^{(k)}Z_q Z_{q+L}}{2}\frac{1+r_{q,\alpha}^{(k)}X_q X_{q+L}}{2}
\end{align}
where the notation is such that $q$ labels sites starting in the top and continuing in the bottom row, i.e., $q\equiv(q_1,1)$ and $q+L\equiv(q_1,2)$; and $r_{q,a}^{k}=-1$ for $k\neq q$ and $1$ otherwise, and
\begin{align}\label{eq:ss2}
s_{q,1}^{(k)} &= s_{q,4}^{(k)} = - s_{q,2}^{(k)} = - s_{q,3}^{(k)}
= \sigma_{q}^{(k)} (-1)^{k-q}\nonumber \text{ for } q\neq k\,,\\
s_{k,2}^{(k)} &= s_{k,4}^{(k)} = - s_{k,1}^{(k)} = - s_{k,3}^{(k)} = 1\text{ for } q= k.
\end{align}
where $\sigma_{q}^{(k)} = +1$ for $q<k$ and $-1$ for $q>k$.  \Eq{eq:Hamiltonian} is dual to $\mathbb{Z}_2$ LGT; see SM~\ref{supp:model} for details, or Ref.~\cite{hartse2025stabilizer}. The evolution of any state in the subspace and any expectation value can be efficiently classically computed. 

\begin{figure}[t]
    \centering
    \includegraphics[width=0.8\linewidth]{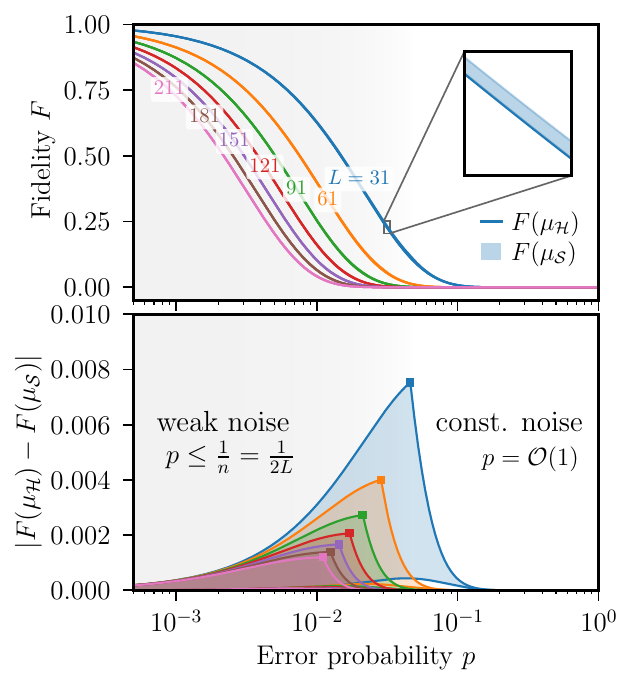}
    \caption{\textit{Noise acting on scar vs. non-scar states.} Comparison between the fidelity under a local depolarizing channel, averaged over Haar-random input from the full Hilbert space, $F({\mu_{\mathcal{H}}})$ (solid lines), and  bounds of the respective fidelity for scar states $F(\mu_{\mathcal{S}})$ (shaded bands, \Eq{eq:loosebound} and SM~\ref{supp:fidelity}). The difference, shown in the lower panel, vanishes as $n \to \infty$ or for small $p$, and (trivially) for $p\rightarrow 1$. The tighter bound is shown, and the crossover between where \Eq{eq:loosebound} is tighter and where the SM result applies is marked by square markers.}
    \label{fig:noisebounds}
\end{figure}
\begin{figure}[t]
    \centering
    \includegraphics[width=0.8\linewidth]{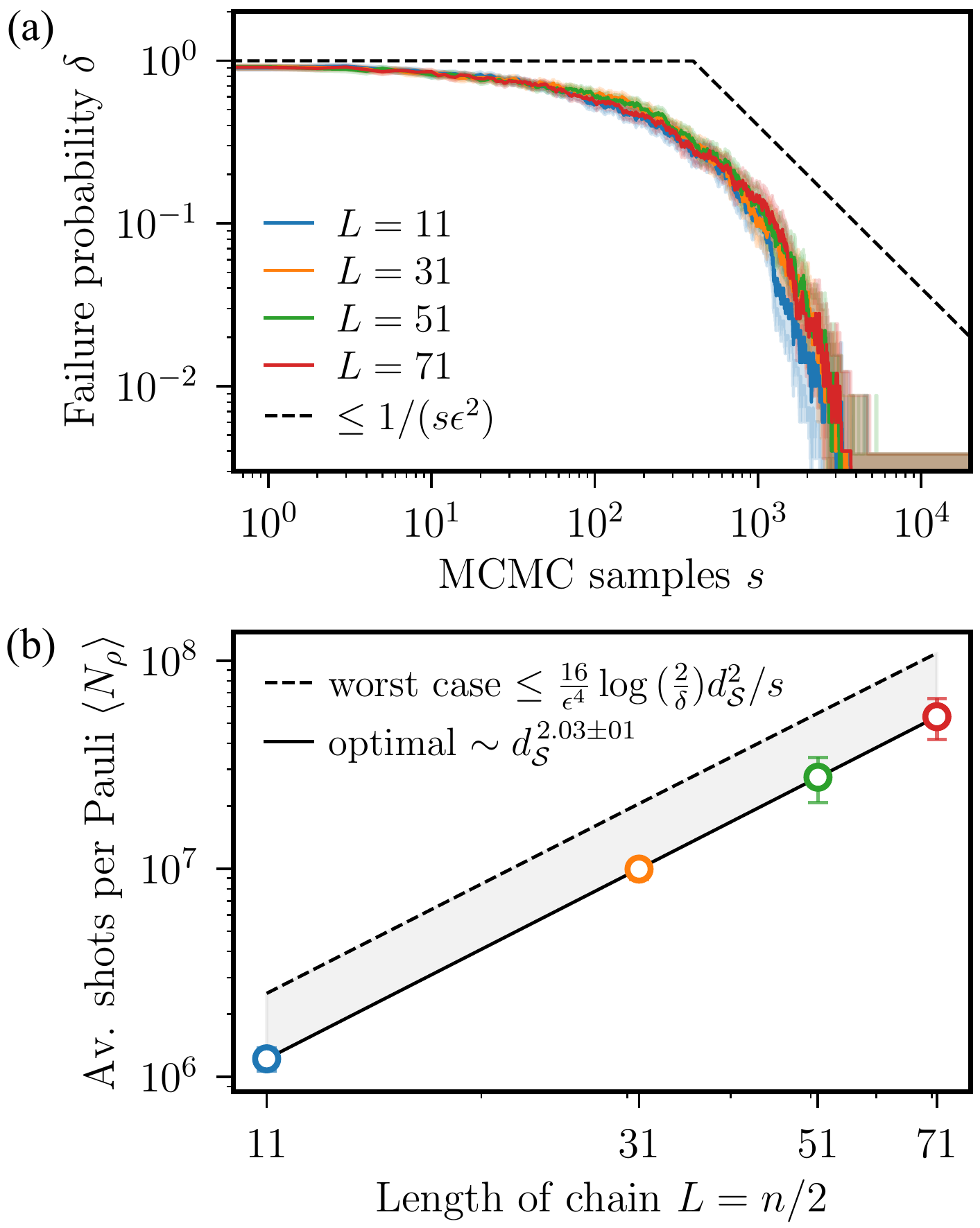}
    \caption{\textit{Numerical MCMC demonstration.}
(a) Failure probability to reach precision $\epsilon=5\cdot 10^{-2}$  of MCMC-based DFE between a target $\rho\in\mathcal{S}$ and a  reference state $\sigma\in\mathcal{S}$  as a function of the number of sampled Pauli operators with exact measurements.   Results shown for $L=11,31,51,71$ ($n=22,62,102,142$ qubits)  compared with the bound $s=\lceil 1/(\epsilon^{2} \delta ) \rceil$~\cite{flammia2011direct} are independent of system size when expectation values are known exactly.
(b) Finite-shot estimation. For $\rho \in \mathcal{S}$, we show the maximal shot number $\langle N_\rho \rangle$ (dashed black) versus the optimal shot cost (points; solid black fit),  attainable if exact expectation values were known (see SM~\ref{App:MCMC}) for precision $\epsilon = 5\cdot 10^{-2}$ and failure probability $\delta = 1\cdot 10^{-2}$. Shown are the average shot number over $s = 10^4$ Paulis, $\langle N_\rho\rangle  =  N_\rho/10^4$.  samples. Both bounds scale as $\sim d_{\mathcal{S}}^2$.}
    \label{fig:MCMC}
\end{figure}

{\em Local depolarization errors are scar-subspace agnostic.} We consider the  subspace spanned by \Eq{eq:ss1} and compute \Eq{eq:scarsubspacefid} for local depolarization error. The first and second terms in~\Eq{eq:scarsubspacefid} can be written, respectively,
\begin{subequations}
\begin{align}
    \sum_{i=1}^{d^2}\mathrm{Tr}[
        K_i^\mathcal{S} K_i^{\mathcal{S}\dagger}
    ]
    &=
    \frac{1}{d}\sum_{i=1}^{d^2}
    (1-p)^{w[W_i]}
    \mathrm{Tr}[\Pi_\mathcal{S} W_i]^2
    \,, \label{eq:term1}\\
    \sum_{i=1}^{d^2}|\mathrm{Tr}[\Pi_\mathcal{S}K_i]|^2 &= \sum_{i=1}^{d^2}\Big(1-\frac{3p}{4}\Big)^{2L-w[W_i]} \left( \frac{p}{4}\right)^{w[W_i]}\nonumber\\&\qquad\qquad\qquad\qquad\times|\mathrm{Tr}[\Pi_\mathcal{S}W_i]|^2 \label{eq:term2}
\end{align}
\end{subequations}
where $w$ is the Pauli weight of $W_i$. The only operators with non-zero trace in $\mathcal{S}$ are tensor products of the stabilizer generators, $X_p X_{p+L}$, $Y_p Y_{p+L}$, and $Z_p Z_{p+L}$, all others  are traceless. From \Eqs{eq:term1}{eq:term2}, using arguments outlined in the section~\ref{supp:fidelity} of the SM, we bound
\begin{align}\label{eq:loosebound}
&\Big(1-\frac{3p}{4} \Big)^n \le  F(\mu_\mathcal{S})\nonumber\\& \le \frac{1}{1+d_s}    +\frac{d_s}{1+d_s}
   \left[
      \Big(1-\frac{3p}{4}\Big)^2
      +3\Big(\frac{p}{4}\Big)^2
   \right]^{\frac{n}{2}}  \, .
\end{align}
The lower and upper bounds bracket $F(\mu_\mathcal{H})$ and coincide with it, and with each other, asymptotically in $n$ or for small $p$.  However, away from this regime, the upper bound \Eq{eq:loosebound}
is greater than or equal to
$F(\mu_\mathcal{H})$ by sub-leading-in-$n$ contributions. Thus the fidelity of non-simulable states may fall below that of the scar benchmark for larger $p$, in which case it may not provide reliable evidence of quantum advantage. Worse, the upper bound \Eq{eq:loosebound} is loose at $p=\mathcal{O}( 1)$.  To obtain tight bounds beyond this ``weak-noise'' regime (large $n$ or small $p$), we exploit the structure of the scar subspace outlined in SM~\ref{supp:fidelity} to upper and lower bound $|\mathrm{Tr}[\Pi_\mathcal{S}W]|$ for the selected operators $W$; these have no compact form and are not written in the main text.

\Fig{fig:noisebounds} illustrates the tightness of these combined bounds, where the top panel compares the fidelity of states in the full Hilbert space $F(\mu_{\mathcal{H}})$ (solid lines) with that of states in the scar subspace $F(\mu_{\mathcal{S}})$ (shaded bands). The difference shown in the bottom panel vanishes as $p \to 0$ or large $n$, and trivially as $p\rightarrow 1$. The weak-noise regime $p \lesssim 1/n$ is contrasted with the constant-noise regime $p = \mathcal{O}(1)$, and the transition between the two bounds is marked by square markers in the bottom panel; the tighter bound is always shown.
The averaged fidelity of non-verifiable typical states is as good as that of states in the scar subspace.
Our bounds hold \textit{on average}, i.e., input states to the error channel are drawn from the respective Haar measure rather than generated through circuit evolution. Below, we  show numerically that the fidelities nearly coincide for  circuit evolution from scar and non-scar initial states.

{\em Numerical Simulations.} We present numerical results using a Markov Chain Monte Carlo (MCMC) sampling algorithm for DFE, \Eq{eq:def_fidelity}. The algorithm proposes only Pauli operators  that have nonzero matrix elements within the target scar subspace, i.e., from the set $P_{\mathcal{S}}\equiv\{ W_k \, | \, \exists \rho \in \mathcal{S} \text{ s.t. }\text{Tr}[\rho W_k]\neq 0\}$; there are $|P_{\mathcal{S}} | = (n^2-n+2 )d$ such operators for the prototype model~\Eqs{eq:Hamiltonian}{eq:ss1}; the precise algorithm is discussed in Section~\ref{App:MCMC} of the SM. 
Since our focus is on demonstrating this efficiency  rather than the MCMC sampler itself, we use a simple uniform proposal distribution over $P_{\mathcal{S}}$.

\Fig{fig:MCMC}(a) shows that the DFE estimate is $\epsilon$-close to the true fidelity with probability $1-\delta$ if $s=\lceil 1/(\epsilon^{2} \delta ) \rceil$ (dashed black line) if the Pauli expectation values are known exactly~\cite{flammia2011direct}. We estimate the fidelity between two pure target states, $|\rho\rangle$ and $|\sigma\rangle$ where $\ket{\rho}$ is chosen Haar-randomly from the scar subspace, and $\ket{\sigma}$ is obtained via time evolution within the same subspace such that $|\sigma\rangle = e^{-i H t} |\rho\rangle$, with $gt = 0.5$. For  fixed  states, we run 500 independent MCMC chains, each with a burn-in of $10^3$ iterations and a thinning interval of $100$, collecting $2\times10^4$ samples per chain. We plot the mean estimate of $\delta$ over all chains, with error bands indicating the $95\%$ Wilks confidence interval~\cite{brooks2011handbook}. As the figure shows curves for $L=11,31,51,71$ ($n=22,62,102,142$ qubits), the accuracy of the estimate is independent of $n$. 

\Fig{fig:MCMC}(b) shows the corresponding analysis for finite-shot estimation. 
The worst-case shot cost, \Eq{eq:leonebound}, (per sample) (dashed line)  is compared with the hypothetical optimal shot cost (circles with solid black fit), corresponding to the number of shots required from binomial statistics if the exact expectation values $\mathrm{Tr}[W_{k_i}\sigma]$ were known; see SM~\ref{App:MCMC} for details. Both  worst and optimal  cases exhibit the same polynomial scaling, $d_{\mathcal{S}}^2\sim n^2$. 

\begin{figure*}[t]
    \centering
    \includegraphics[width=1.0\linewidth]{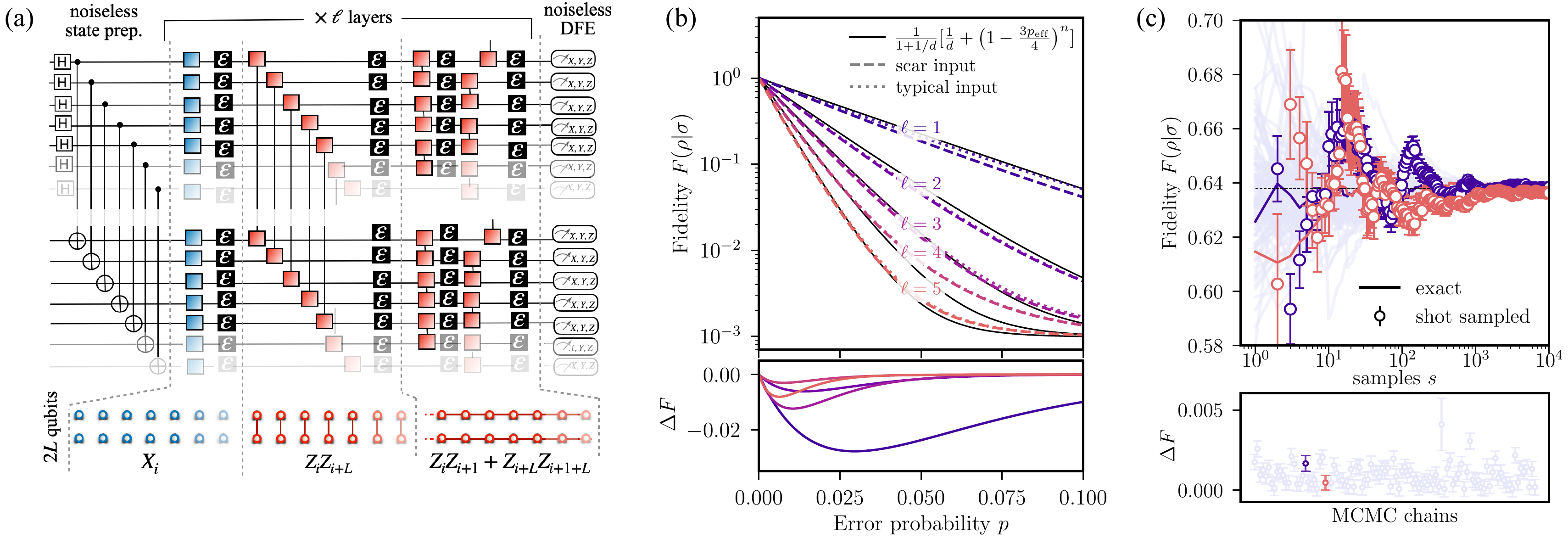}
    \caption{\textit{Circuit Simulation.} 
(a) Quantum circuit used to simulate \Eq{eq:Hamiltonian}, consisting of state preparation, (random or Trotterized) time evolution, and DFE. Local depolarizing noise is assumed, except during state preparation and measurement. The state preparation circuit consists of $L=n/2$ two-qubit Bell-pair sub-circuits and is applied to an input computational basis product state $| s \rangle$, where s is an $n$-qubit bitstring. For special $s$, the circuit prepares QMBS basis states, while for all others it produces orthogonal non-scar states.
(b) Classical emulation of the circuit in (a) for $n=10$ qubits comparing fidelities as a function of  $p$ for different circuit depths $\ell$, comparing scar basis state (dashed lines) with non-scar inputs (dotted lines). The  results are compared with the analytical expression for the {\em average} fidelity, Eq.~\eqref{eq:fidscar}, modified by an effective error rate $p_{\rm eff}=1-(1-p)^\ell$ accounting for the $\ell$ layers (solid black lines). The bottom panel shows the difference $\Delta F \equiv F_\mathcal{S} - F_\mathcal{H}$, decreasing with $\ell$. (c) Noisy circuit simulation with $n=10$ qubits, $\ell=15$ layers and $p=0.001$, where Pauli expectation values are estimated from finite-shot sampling. Top: Fidelity versus the number of MCMC samples, $s$. Circles show shot-sampled results ($10^4$ shots per Pauli observable), while solid lines correspond to exact Pauli expectation values. Two representative MCMC chains are shown out of $100$. Bottom: Final fidelity precision at the end of shot-sampled MCMC chains ($s=10^4$).}
    \label{fig:circuit}
\end{figure*}

Next, we focus on simulating noisy quantum circuits for $\sigma$. We classically emulate, using \textsc{qiskit}~\cite{javadi2024quantum}, the preparation of an initial basis state and its subsequent (noisy) time evolution, see \Fig{fig:circuit}(a) for circuit and error model. We apply single qubit depolarizing noise after every gate in the time evolution circuit but not for state preparation or measurement.  The initial state preparation circuit consists of $L=n/2$ two-qubit Bell-pair sub-circuits  applied to a product state $| s \rangle$, where s is an $n$-qubit bitstring. For special choices of $s$, the circuit prepares QMBS basis states, while for generic bitstrings it produces orthogonal non-scar states. The evolution circuit may either be Trotterized time evolution circuit or a random  circuit  preserving the subspace.

For simplicity, we employ a random time evolution circuit for $n=10$ qubits of the subspace preserving form,
\begin{align}
&U(\boldsymbol{\alpha},\boldsymbol{\beta},\boldsymbol{\gamma}) \equiv \prod_k^\ell e^{-i\sum_{i=1}^L {\alpha_{k,i}} [Z_i Z_{i+1} + Z_{i+L} Z_{i+L+1} ]} 
\nonumber\\
 &\times e^{ -i \sum_{i=1}^L {\beta_{k,i}} Z_iZ_{i+L}}  e^{- i\sum_{i=1}^L\gamma_{k,i} [X_i+X_{i+L}]}
\end{align}
where $\ell$ is the number of  evolution layers and the angles $\boldsymbol{\alpha},\boldsymbol{\beta},\boldsymbol{\gamma}$ are randomly (uniformly in $i$) chosen for every layer; single $X$ (two-qubit $ZZ$) rotations are shown in blue (red) in~\Fig{fig:circuit}(a)~\footnote{A Trotter time evolution circuit is obtained by setting $\alpha_{k,i}=\beta_{k,i}=\gamma_{k,i}=\delta t$ where $\delta t $ is the time step.}.
\Fig{fig:circuit}(b) compares the fidelity of a circuit evolved state within the scar subspace, $F(\rho_{\mathcal S},\mathcal{E}(\rho_{\mathcal S}))$, with that of a typical state outside the subspace, $F(\rho_{\mathcal H},\mathcal{E}(\rho_{\mathcal H}))$. Here, $\rho_{\mathcal S}$ is obtained by evolving the basis scar state $\varphi_{0,1}$ [Eq.~\eqref{eq:ss1}] under the ideal circuit, while $\mathcal{E}(\rho_{\mathcal S})$ is the corresponding output of the same circuit in the presence of noise. The states $\rho_{\mathcal H}$ and $\mathcal{E}(\rho_{\mathcal H})$ are defined analogously for an initial state outside the scar subspace using the same circuit, the only difference being the input string $s$. The bottom panel shows the difference which vanishes with circuit layers $\ell$.
Although limited to modest system size, the fidelities are nearly indistinguishable at $n=10$, and both are very close to average analytical results, indicating that noise does not distinguish between states in the scar subspace versus the full Hilbert space.  Along with our analytical bounds, this indicates that our protocol accurately estimates the fidelity of  classically intractable states, up to controlled corrections that vanish asymptotically. Moreover, using concentration of the Haar measure, one can show that the fidelity equivalence holds not just on average, but for most states (see SM~\ref{supp:fidelity} for details).

In \Fig{fig:circuit}(c) we emulate a noisy experiment by simulating the circuit in (a), starting from a stabilizer-scar basis state and applying local depolarizing noise with strength $p=0.001$ to a system of $n=10$ qubits evolved for $\ell=15$ layers. We generate $100$ independent MCMC chains and estimate the fidelity using either exact Pauli expectation values (solid lines) or finite-shot measurements with $10^4$ shots per Pauli observable (circles; error bars denote the propagated statistical uncertainty). The top panel shows two representative MCMC trajectories, while the bottom panel summarizes the fidelity precision achieved after $s=10^4$ MCMC samples. Consistent with \Fig{fig:MCMC}, the observed precision and failure probability substantially outperform the bound \Eq{eq:leonebound}.

{\em Discussion.} We presented a protocol for benchmarking quantum simulations at scale. The approach rests on three key ingredients: (i) classically simulability of scar states, (ii) efficient fidelity estimation via the stabilizer structure of the subspace, and (iii) that their fidelity tightly constrains that of classically non-simulable states evolved under the same circuit.
The key to this protocol is that, although the unitary dynamics differ strongly between the scar and thermalizing subspaces, local decoherence acts similarly in both.  This allows us to benchmark performance by only considering simulable dynamics, and estimating fidelity in the non-simulable regime.

There are several caveats. First, (iii) has only been proven as an average over input states to the error channel. Numerical evidence, however, strongly suggests that it also holds for individual states computed by quantum circuits. Second, our analysis assumes local depolarizing noise, and it remains to be seen how the results extend to other error models. However, while one can certainly engineer either a subspace or an error model that violates (iii), generic noise is unlikely so specific as to preserve the commutant of a nontrivial, digitally simulated Hamiltonian, let alone of arbitrary {systematically constructible  models}. Finally, it is unclear whether local stochastic noise is an effective description for logical qubits. Although logical error models can be derived  under simplifying assumptions~\cite{rahn2002exact,iyer2022efficient}, recent work suggests the possibility of non-Markovian logical noise~\cite{sutherland2018non,ziyad2025emergent};  for approximate decoding~\cite{demarti2024decoding,roffe2005decoding}  logical errors are even less understood. Many practical improvements are  also possible, for instance our MCMC algorithm is not efficient.

A next step is is the implementation of the protocol on  quantum hardware, where achieving fidelities beyond those attainable by classical algorithms  would constitute a demonstration of quantum advantage. Our goal is benchmarking discovery-scale quantum simulations of nonequilibrium dynamics and thermalization in structured models such as LGTs~\cite{mueller2025quantum,mueller2022thermalization,zhou2022thermalization}, with applications to high-energy and nuclear physics, condensed matter, and materials science.

\textit{Acknowledgments.} We thank Andreas Elben, Lukasz Fidkowski, and Cole Maurer for discussions. This material is based upon work supported by the U.S. Department of Energy, Office of Science, National Quantum Information Science Research Centers, Quantum Systems Accelerator (Award No. DE-SCL0000121). J.H. is supported, in part, by U.S. Department of Energy, Office of Science, Office of Nuclear Physics, InQubator for Quantum Simulation (IQuS) under Award Number DOE (NP) Award DE-SC0020970 via the program on Quantum Horizons: QIS Research and Innovation for Nuclear Science, and, in part, by the Department of Physics and the College of Arts and Sciences at the University of Washington.

\bibliography{bibi.bib}
\appendix
\section*{Supplemental Material}

\setcounter{subsection}{0}

\renewcommand{\thesubsection}{S\arabic{subsection}}

\renewcommand{\theequation}{S\arabic{equation}}
\setcounter{equation}{0}

\renewcommand{\thefigure}{S\arabic{figure}}
\setcounter{figure}{0}

\subsection{Fidelity Estimation in Scar-subspace}\label{App:efficient_fidelity_estimation}
In this section, we show that direct fidelity estimation (DFE) is efficient for states in the scar-subspace. We do this by bounding the $\alpha$-SRE of any state $\rho \in \mathcal{S}$ (target state). 
\subsubsection{Upper-Bound on $\alpha$-SRE of $\rho$}
We start by noting that $P_{\rho}(k_{i})=\text{Tr}[\rho W_{k_{i}}]^{2}/d$ is a probability distribution for a pure state.
Using this we define the stabilizer Renyi entropy of $\rho \in \mathcal{S}$ as \cite{leone2022stabilizer}:
\begin{equation}\label{eq:app:def_stab_renyi_entropy}
    M_{\alpha} \equiv \frac{1}{1-\alpha} \log(\xi_\alpha)\,, \quad \xi_\alpha \equiv \frac{1}{d}\sum_{W_k\in \mathcal{P}_{n}} \text{Tr}[\rho W_{k}]^{2\alpha}\,.
\end{equation}
Note that $M_0$ is defined as the logarithm of the number of Paulis with a non-zero expectation value. Consider also 
the stabilizer fidelity of the state  $\ket{\psi} $ defined as \cite{wei2003geometric}:
\begin{align}\label{eq:stabilizerfidelitydef}
    \mathcal{F} (| \psi \rangle ) \equiv \max_{\ket{\phi} \in \text{STAB}} | \langle \phi | \psi \rangle |^2,
\end{align}
where the maximization runs over the set of all stabilizer states ($\text{STAB}$). We now show that the stabilizer fidelity of the target state $\rho := \ketbra{\psi_f} \in \mathcal{S}$ is lower bounded in the following way:
\begin{equation}
    \mathcal{F}(\ket{\psi_{f}} ) = \max_{\ket{\phi} \in \text{STAB}} | \langle \phi | \psi_{f} \rangle |^2 \geq \frac{1}{d_{s}}. 
\end{equation}

This is seen as following. First, note that given an orthonormal basis $S= \{x_{1}, x_{2}, \dots, x_{d_{s}} \}$ in $\mathbb{C}^{d_{s}}$, the maximum inner product between a given normalized vector $\ket{\psi} \in \mathbb{C}^{d_{s}}$ and any of the basis vectors is bounded by:
\begin{equation}\label{eq:packing}
    c_{\max}^{2} = \max_{i} |\langle x_{i}|\psi\rangle|^{2}  \geq \frac{1}{d_{S}}\,,
\end{equation}
where $d_S$ is the dimension of $\text{span}(S)$.
This bound holds even if we extend the set $ S $ with an arbitrary element $v \in\mathbb{C}^{d_{s}} $. 
\begin{figure*}[t]
    \centering
    \includegraphics[width=0.9\linewidth]{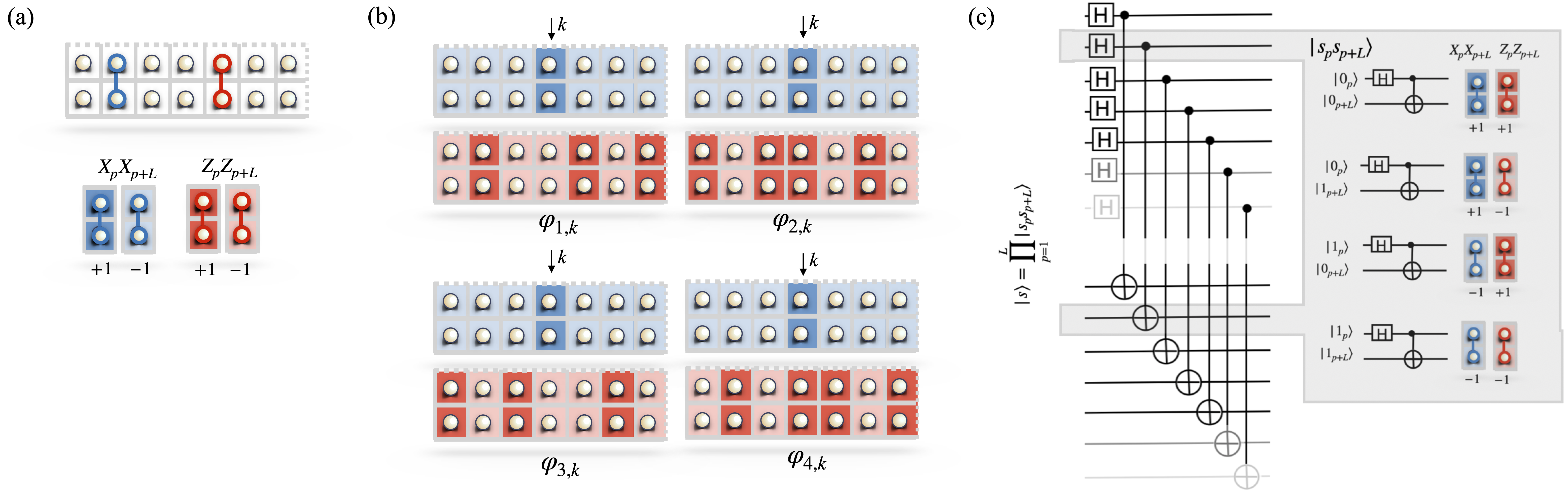}
    \caption{\textit{Stabilizer scar subspace of the prototype model.}
(a) The model, \Eq{eq:Hamiltonian} of the main text, is defined on an $L \times 2$ lattice
with periodic boundary conditions. The dual lattice of the LGT,
Eq.~\eqref{eq:LGT}, is shown in gray, in which the degrees of freedom reside
on the links.
(b) For odd $L$, the QMBS subspace is spanned by the orthonormal stabilizer
basis states shown here, each characterized by a distinct eigenvalue pattern of
the stabilizer generators $X_i X_{i+L}$ (blue) and $Z_i Z_{i+L}$ (red), where
darker (lighter) shading indicates a $+1$ ($-1$) eigenvalue. (c) Initial states are prepared using simple Bell circuits specified by an input bit string $s$, which can generate any QMBS basis state or a state outside the scar subspace with the same circuit complexity. Subfigures (a) and (b) adapted from Ref.~\cite{hartse2025stabilizer}.}
    \label{fig:model}
\end{figure*}

Take $S$ to be the stabilizer basis of the scar subspace $\mathcal{S}$. The stabilizer fidelity of $\rho$ is given by
\begin{equation}\label{eq:stab_fidelity_maximization}
    \mathcal{F} (| \psi \rangle ) \equiv \max_i |{_{i}\langle \phi |} \psi_{f} \rangle |^2,
\end{equation}
where $\ket{\phi}_{i} \in \mathrm{STAB} =\{ \ket{\phi}_{1},\ket{\phi}_{2}, \ket{\phi}_{3},\dots, \ket{\phi}_{m}  \}$, the set of all $n$-qubit stabilizer states. Since $\ket{\psi_f} \in \mathcal{S}$, the maximization over all stabilizer states in the RHS of \cref{eq:stab_fidelity_maximization} reduces to maximization over $\text{STAB} \cap\mathcal{S}$, which is an extension of the set $S$. Then, using Eq.~\eqref{eq:packing}, we get
\begin{equation}\label{eq:scarstabfidelitybound}
    \mathcal{F} (| \psi \rangle ) \equiv \max_i |_i{\langle \phi |} \psi_{f} \rangle |^2 \geq \frac{1}{d_{s}}.
\end{equation}

Now, we use the lower bound on the stabilizer fidelity to lower-bound the average value of $\text{Tr}[\rho W_{k_{i}}]^{2}$. This is done by using the connection between stabilizer fidelity with the $\alpha$-SRE (for $\alpha > 1$)\cite{haug2023stabilizer}:
\begin{align}\label{eq:relativeebtropyandstabfidbound}
    M_\alpha (| \psi \rangle ) \le \frac{2\alpha}{\alpha - 1}  \log[\mathcal{F} (| \psi \rangle )]^{-1}.
\end{align}
Combining \cref{eq:scarstabfidelitybound} with \cref{eq:relativeebtropyandstabfidbound} yields:
\begin{equation}\label{eq:boundsonxi}
\begin{split}
     M_\alpha (| \psi \rangle ) &\le  \frac{2\alpha}{\alpha - 1}  \log(d_{s} ).
\end{split}
\end{equation}
Thus, $\alpha$-SRE is logarithmically upper-bounded for all states in the scar subspace for $\alpha >1$. In \cref{eq:countingPS} we show that the $0$-SRE is upper-bounded logarithmically in system size.

\subsubsection{Connection of $\alpha$-SRE to DFE}

Recall from \cref{eq:boundDFE} of the main text that the sampling cost of DFE to guarantee an $\epsilon$-$\delta$ convergence is related to the minimum expectation value of the sampled Pauli  $W_{k_{i}}$. The upper-bound in \cref{eq:boundsonxi} implies that $\xi_{2}$, which is the average value of random variable $\Tr[\rho W_{k_{i}}]^{2}$ as seen in \cref{eq:app:def_stab_renyi_entropy}, is lower-bounded as:
\begin{equation} \label{eq:lowerboundonexpenctationvalue}
\begin{split}
    &\underset{k_{i}\sim P_{\rho}}{\mathbb{E}}[\Tr[\rho W_{k_{i}}]^{2}] \geq \frac{1}{d_{s}^{4}}. 
\end{split}
\end{equation}
Consider the worst-case scenario where Eq.~\eqref{eq:lowerboundonexpenctationvalue} is saturated.  Since $\Tr[\rho W_{k_{i}}]^{2}$ is a bounded random-variable, it is sub-gaussian by Hoeffding's lemma:
\begin{equation}\label{eq:Hoeffdings_tail_bound}
    \text{Pr}\left[\left|\Tr[\rho W_{k_{i}}]^{2} - \underset{k_{i}\sim P_{\rho}}{\mathbb{E}}[\Tr[\rho W_{k_{i}}]^{2}] \right| \geq \gamma\right]\leq 2 \exp \left ( -2\gamma^{2}\right)\,,
\end{equation}
where $\gamma>0$ is a constant.
Thus, it us exponentially unlikely for a Pauli expectation values to be significantly smaller than the average in \cref{{eq:lowerboundonexpenctationvalue}}. Combining \cref{eq:lowerboundonexpenctationvalue}, \cref{eq:Hoeffdings_tail_bound} and \cref{eq:boundDFE} immediately implies that DFE is efficient.

More rigorously, all Pauli expectation values that are smaller than \cref{{eq:lowerboundonexpenctationvalue}} can be eliminated using the `$\epsilon$-truncation trick', wherein, one pays a small bias as a cost for the truncation (as seen in \cite{flammia2011direct,leone2023nonstabilizerness}). This is done by constructing a truncated state $\rho_{2}$, where any Pauli expectation with $|\Tr{\rho W_{k}}| < \epsilon/d_{s}^2$ are ignored. Doing DFE with $\rho_{2}$ gives the correct fidelity up to an asymptotically vanishing bias: $|\Tr{\rho_{2}\sigma} - \Tr{\rho\sigma}| \leq \epsilon/O(d_{s})$. Since $\rho_{2}$ has bounded Pauli expectation values by definition, $|\Tr{\rho_{2} W_{k}}| \geq \epsilon/d_{s}^2$, the shot sampling cost of DFE is 
\begin{equation}\label{eq:bound}
\begin{split}
    \underset{k_{i}\sim P_{\rho_{2}}}{\mathbb{E}} [m_{i}] = \underset{k_{i}\sim P_{\rho_{2}}}{\mathbb{E}} \left\lceil \frac{2\log(2/\delta)}{\text{Tr}[\rho_{2} W_{k_{i}}]^{2} l \epsilon^{2}} \right\rceil \leq \left \lceil \frac{2d_{s}^4\log(2/\delta)}{ \epsilon^{4}} \right \rceil. 
\end{split}
\end{equation}
A tighter and more general version of this argument is presented in Ref.~\cite{leone2023nonstabilizerness}. Considering total number of copies of the state $
N_{\rho} ={\mathbb{E}}[m_{i}] = \sum_{i=1}^{l}m_{i}$, Leone et al. \cite{leone2023nonstabilizerness} show that
\begin{equation}\label{eq:leone_bound}
    \frac{2}{\epsilon^2}\log(2/\delta) \exp[M_{2}] \leq N_{\rho}^{\text{\cite{leone2023nonstabilizerness}}} \leq \frac{64}{\epsilon^4}\log(2/\delta) \exp[M_0],
\end{equation}
where $M_{2}$ and $M_{0}$ are the 2- and 0-SREs of $\rho$, respectively. Inserting in the upper-bound on $M_{0}$ from \cref{eq:countingPS},  directly leads to \cref{eq:leonebound} of the main text.

In the main text we show that upper-bound derived from \cref{eq:leone_bound} scales as $\mathcal{O}(d_{s}^2)$ as compared to $\mathcal{O}(d_{s}^4)$ in \cref{eq:bound}. This may be a consequence of the looseness of \cref{eq:relativeebtropyandstabfidbound}. Numerical evidence presented in Ref.~\cite{haug2023stabilizer} suggests that the prefactor appearing in Eq.~\eqref{eq:relativeebtropyandstabfidbound} may be smaller by a factor of half for $1\leq \alpha \leq 2$,  implying $\xi_2 \gtrsim d_s^{-2}$, in which case Eq.~\eqref{eq:leone_bound} and Eq.~\eqref{eq:bound} scale identically. 

\subsection{QMBS Model Details}\label{supp:model}
In this section of the Supplemental Material, we provide additional details
about the model, introduced in Ref.~\cite{hartse2025stabilizer} and in \Eq{eq:Hamiltonian} of the main text, and its QMBS subspace. The model is not unique in its hosting of stabilizer scars, and other models have been found or can be systematically constructed~\cite{dooley2026parent,hokkyo2026exact,gupta2026exact}.

Details were previously presented in Ref.~\cite{hartse2025stabilizer}, and we provide only a high level summary. The model, \Eq{eq:Hamiltonian}, is defined on a rectangular $L\times 2$ lattice with periodic boundary conditions and parity constraint $\prod_{p=1}^{2L}X_p=1$, in which case it is dual to $\mathbb{Z}_2$ LGT on $L\times 2$ plaquettes with Hamiltonian~\cite{wegner1971duality,horn1979hamiltonian,sachdev2019topological},
\begin{align}\label{eq:LGT}
    H_{\mathbb{Z}_2}=-\sum_{p} \prod_{l \in p}\sigma^x_{l} - g\sum_l \sigma^z_l\,,
\end{align}
where $l$ labels a link of the dual lattice, and $p$ labels a plaquette. The model is subject to $2L-1$ independent Gauss law constraints, $G_s\equiv \prod_{\ell \in s} \sigma^z_\ell$, with $G_s | \psi \rangle =| \psi \rangle  $ for ``allowed/physical'' states $| \psi\rangle$, where $s$ labels a site in the dual lattice, thus matching the number of d.o.f. in~\Eq{eq:Hamiltonian} of the main text.

The stabilizer QMBS structure of the model originates from an underlying
commutant algebra generated by fermionic operators. As shown in
Ref.~\cite{hartse2025stabilizer}, the spin model,~\Eq{eq:Hamiltonian} of the main text, admits
a Majorana representation via the Jordan-Wigner (JW) transformation. In this
formulation, the Majoranas can be grouped into different
fermions, and the QMBS subspace corresponds to the conserved single-particle
and single-hole subspace when $L$ is odd. In this case, the operator
$(\hat{N}-L)^2$, where $\hat{N}$ is the total fermion number operator,
generates the full commutant of the scar subspace.
In the spin model, the stabilizer generators $X_i X_{i+L}$ and $Z_i Z_{i+L}$
(for $i = 1,\dots,L$), see \Fig{fig:model}(a),  correspond to relative parity operators between adjacent
fermionic modes. The orthonormal stabilizer states spanning the QMBS subspace
are depicted in Fig.~\ref{fig:model}(a) and (b), adapted from Ref.~\cite{hartse2025stabilizer}.
The matrix elements of the Hamiltonian within the scar subspace, which grows
only linearly in the qubit number $n = 2L$, are derived in the Supplemental
Material of that reference, enabling efficient simulation of scar dynamics for
systems of arbitrary size. 

State preparation of stabilizer-scar basis states is particularly straightforward. \Fig{fig:model}(c) illustrates the circuit used to prepare any such state. The circuit consists of a simple Bell-state preparation circuit acting on an input bitstring $s$, which is generated by applying \textsc{not} ($X$) gates to the corresponding qubits. Certain input bitstrings produce basis states of the form shown in \Fig{fig:model}(b), corresponding to stabilizer-scar states, while all other bitstrings generate non-scar states. Importantly, the preparation of scar and non-scar states requires exactly the same circuit and therefore has identical circuit complexity.

When simulating noisy circuits, we employ the \textsc{qiskit} 
simulator~\cite{javadi2024quantum} with a local depolarizing noise model of single-site error
probability $p$, as illustrated in Fig.~\ref{fig:circuit}(a). The circuit is
left uncompiled (relative to QISKIT's native gateset), so that the noise channel is inserted immediately after each
gate layer as it appears in the figure. The elementary gate primitives are
single-qubit rotations $\exp\{i \alpha_i X_i\}$ and two-qubit entangling gates
$\exp\{i \alpha_{ij} Z_i Z_j\}$. The local depolarizing channel is applied after each
such gate; no noise acts on the initial-state preparation circuit shown in
Fig.~\ref{fig:model}(c), nor on the final measurement layer.

\subsection{Average fidelity of the noisy channel}\label{supp:fidelity}
In this section of the Supplemental Material, we analyze the fidelity of states within the scar subspace, and compare it with those in the full Hilbert space, under local depolarizing noise. For simplicity, we focus on the average channel fidelity with respect to input states drawn from a given measure $\mu$, as defined in Eq.~\eqref{eq:avfidel} of the main text. In particular, $\mu_{\mathcal{H}}$ denotes the Haar measure on the full Hilbert space, while $\mu_{\mathcal{S}}$ denotes the Haar measure restricted to the (polynomially sized) stabilizer scar subspace.

For the Haar measure $\mu_{\mathcal{H}}$, this can be calculated exactly following Ref.~\cite{Mele2024introductiontohaar}:
\begin{equation}
    \label{eq:averagefidelityfull}
    F(\mu_{\mathcal{H}}) = \frac{1}{d(d+1)} \Big( d+ \sum_{i=1}^{d^2}|\text{Tr}\left(K_{i} \right)|^{2} \Big),
\end{equation}
where $d $ is dimension of the Hilbert space, and $K_i$ are the Kraus operators for the local depolarizing channel with probability $p$, i.e., $K_i \in \{{(1-\frac{3p}{4})^{\frac{1}{2}}} \mathbb{I}, {(\frac{p}{4})^{\frac{1}{2}}}X, {(\frac{p}{4})^{\frac{1}{2}}}Y, {(\frac{p}{4})^{\frac{1}{2}}}Z \}^{\otimes n}$, where $X,Y,Z$ are Pauli matrices. All $K_i$ are traceless, except, $K_1=(1-\frac{3p}{4})^{n/2} \mathbb{I}$, therefore 
\begin{align}
    \sum_{i=1}^{d^2}|\text{Tr}\left(K_{i} \right)|^{2} = d^2 \Big( 1 - \frac{3p}{4}\Big)^n\,,
\end{align}
and one obtains \Eq{eq:fidscar} of the main text. Similarly, we can consider the average channel fidelity over states in the scar subspace, under the same error channel,
\begin{align}
    F(\mu_\mathcal{S})&=\sum_{i=1}^{d^2}\mathbb{E}_{|\psi \rangle \sim \mu_{\mathcal{S}}}\Big[ \text{Tr}\Big( | \psi \rangle \langle \psi| K_i| \psi \rangle \langle \psi| K_i^\dagger  \Big)\Big]\nonumber
    \\ &=\sum_{i=1}^{d^2}\mathbb{E}_{|\psi \rangle \sim \mu_{\mathcal{S}}}
    \Big[
\text{Tr}\Big( 
| \psi \rangle \langle \psi|^{\otimes 2} (K_i \otimes K_i^\dagger) \mathbb{F}
\Big)
    \Big]\,,
\end{align}
where  $\mathbb{F}$ is the swap operator. Since  $|\psi \rangle \langle \psi |=\Pi_\mathcal{S}|\psi \rangle \langle \psi | \Pi_\mathcal{S} $ for states in $\mathcal{S}$, projectors $\Pi_\mathcal{S}$ on the subspace can be introduced,
\begin{align}\label{eq:longuglyscar}
&\sum_{i=1}^{d^2}\mathbb{E}_{|\psi \rangle \sim \mu_{\mathcal{S}}}
    \Big[
\text{Tr}\Big( 
| \psi \rangle \langle \psi|^{\otimes 2} (\Pi_\mathcal{S}K_i\Pi_\mathcal{S} \otimes \Pi_\mathcal{S}K_i^\dagger\Pi_\mathcal{S}) \mathbb{F}
\Big)
    \Big]\nonumber\\
    =& \sum_{i=1}^{d^2}\text{Tr}\Big(\frac{\mathbb{I}_\mathcal{S}+\mathbb{F}_\mathcal{S}}{d_s(d_s+1)}(\Pi_\mathcal{S}K_i\Pi_\mathcal{S} \otimes \Pi_\mathcal{S}K_i^\dagger\Pi_\mathcal{S}) \mathbb{F}\Big)
    \nonumber\\
    =& \sum_{i=1}^{d^2} \frac{1}{d_s(d_s+1)}\Big\{  \text{Tr}(K_i^\mathcal{S} \otimes K_i^{\mathcal{S}\dagger}  \mathbb{F} + K_i^\mathcal{S} \otimes K_i^{\mathcal{S}\dagger} \mathbb{F} \mathbb{F}_\mathcal{S})\Big\}
    \nonumber\\
    = &\sum_{i=1}^{d^2} \frac{1}{d_s(d_s+1)}\Big\{ \text{Tr}(K_i^\mathcal{S}K_i^{\mathcal{S}\dagger})+| \text{Tr}(K_i^\mathcal{S})|^2\Big\}\,,
\end{align}
yielding \Eq{eq:scarsubspacefid} of the main text. We have abbreviated $K_i^\mathcal{S}\equiv \Pi_\mathcal{S}K_i\Pi_\mathcal{S}$ and $\mathbb{F}_\mathcal{S}$  and $\mathbb{I}_\mathcal{S}$ are the swap and identity operators in $\mathcal{S}$, respectively; we used that $\mathbb{F}$ commutes with operators of the form $\hat{O}\otimes \hat{O}$, and that $\mathbb{F}\mathbb{F}_\mathcal{S}$ acts like identity on states within the scar subspace, i.e., $(\Pi_\mathcal{S} \otimes \Pi_\mathcal{S})\mathbb{F}\mathbb{F}_\mathcal{S}(\Pi_\mathcal{S} \otimes \Pi_\mathcal{S})=\mathbb{I}_\mathcal{S}$.

We  will now proceed to bound both terms. We start by noting that when $p = 0$, $F(\mu_{\mathcal{H}}) = F(\mu_{\mathcal{S}}) = 1$, as in the channel $\mathcal{E}$ becomes the identity operation. Similarly, when $p = 1$, the channel maps all states to the maximally mixed state, $\mathcal{E}(\rho) = \mathbb{I}/d$. Since $(1/d)\bra{\psi} \mathbb{I} \ket{\psi} = 1/d$ for all pure states $\ket{\psi}$, it follows that $F(\mu_{\mathcal{H}}) = F(\mu_{\mathcal{S}}) = 1/d$. Hence, we want bounds on $F(\mu_{\mathcal{S}})$ that converge to $F(\mu_{\mathcal{H}})$ at $p = 0$ and $p = 1$. To achieve this, we construct two bounds. One is tight at $p = 0$ and the other is tight at $p = 1$. The final bound will be the intersection of these two bounds.

\subsubsection{Bound tight near $p=0$}
Consider the first term in \Eq{eq:longuglyscar},
\begin{align}\label{eq:firstterm}
    \sum_{j=1}^{d^2}\text{Tr}(K_i^\mathcal{S}K_i^{\mathcal{S}\dagger})=\text{Tr}\left( \Pi_\mathcal{S} \sum_{i=1}^{d^2}K_i \Pi_\mathcal{S} K_i^\dagger\right)
    =\text{Tr}(\Pi_\mathcal{S} \mathcal{E}(\Pi_\mathcal{S}))
\end{align}
where $\mathcal{E}(\cdot)$ is the error channel. Using the Pauli decomposition of $\mathcal{E}(\Pi_\mathcal{S})$, and assuming local depolarizing noise, we find
\begin{align}\label{eq:firstterm2}
     \sum_{i=1}^{d^2}\text{Tr}(K_i^\mathcal{S}K_i^{\mathcal{S}\dagger})=\frac{1}{d}\sum_{j=1}^{d^2}(1-p)^{w[W_j]} \text{Tr}(\Pi_\mathcal{S}W_j)^2
\end{align}
where $w[W_j]$ is the weight of the Pauli string $W_j$. Without considering the structure of $\text{Tr}(\Pi_\mathcal{S}W_j)$, an upper bound for this term, loose but accurate at small $p$, is
\begin{align}
\label{Eq:Weak_FTUB}
    \sum_{i=1}^{d^2}\text{Tr}(K_i^\mathcal{S}K_i^{\mathcal{S}\dagger}) \le \frac{1}{d} \sum_{j=1}^{d^2} \text{Tr}(\Pi_\mathcal{S}W_j)^2= \text{Tr}(\Pi_\mathcal{S}^2)=d_s\,.
\end{align}
 The lower bound can be found by keeping only the $K_1$ Krauss operator in  \Eq{eq:firstterm}, i.e.,
\begin{align}
\label{Eq:Weak_FTLB}
    \sum_{i=1}^{d^2}\text{Tr}(K_i^\mathcal{S}K_i^{\mathcal{S}\dagger}) &\ge \text{Tr}(K_1^\mathcal{S}K_1^{\mathcal{S}\dagger}) \nonumber \\
    &=\text{Tr}(\Pi_\mathcal{S})\Big( 1-\frac{3p}{4} \Big)^n = d_s\Big( 1-\frac{3p}{4} \Big)^n\,.
\end{align}
Notice that the above two bounds are exact at $p = 0$, and hence tight. The second term in \Eq{eq:scarsubspacefid} of the main text can be written
\begin{align}
    \label{eq:second_scar_term}
    \sum_{i=1}^{d^2}|\text{Tr}(K_i^\mathcal{S})|^2 = &\sum_{j=1}^{d^2}\Big( 1-\frac{3p}{4}\Big)^{2L-w[W_j]}\nonumber\\ &\times \Big(\frac{p}{4}\Big)^{w[W_j]}|\text{Tr}(W_j\Pi_\mathcal{S})|^2\,.
\end{align}
A straightforward upper bound can be obtained by noting that the only operators with non-zero trace in $\mathcal{S}$ are tensor products of $X_p X_{p+L}$, $Y_p Y_{p+L}$, and $Z_p Z_{p+L}$. For these, one can upper bound $|\text{Tr}(W_j\Pi_\mathcal{S})|^2\le d_s^2$ (a severe overestimate) to find
\begin{align}
    \label{Eq:Weak_STUB}
    &\sum_{i=1}^{d^2}|\text{Tr}(K_i^\mathcal{S})|^2 \le d_s^2 \sum_{j=1}^{d^2}\Big( 1-\frac{3p}{4}\Big)^{2L-w[W_j]}\Big(\frac{p}{4}\Big)^{w[W_j]}\nonumber\\
    &=\sum_{w=0,2,\dots}^{2L}\Big( 1-\frac{3p}{4}\Big)^{2L-w}\Big(\frac{p}{4}\Big)^{w} \binom{L}{w/2}3^{w/2} 
    \nonumber\\
    &=d_s^2 \left[\sqrt{\Big(1-\frac{3p}{4}\Big)^2 +3\Big(\frac{p}{4}\Big)^2}\right]^n\,.
\end{align}
A lower bound is obtained by retaining only the $i=1$ term in
\Eq{eq:scarsubspacefid} of the main text,
\begin{align}
    \label{Eq:Weak_STLB}
    \sum_{i=1}^{d^2}|\text{Tr}(K_i^\mathcal{S})|^2\geq
    |\text{Tr}(\Pi_\mathcal{S})|^2 \left(1-\frac{3p}{4}\right)^n
    = d_s^2\left(1-\frac{3p}{4}\right)^n,
\end{align}
which yields the bounds
\begin{align}
&\Big( 1-\frac{3p}{4}\Big)^n  \le F(\mu_\mathcal{S})\nonumber\\ &\le  \frac{1}{1+d_s}    +\frac{d_s}{1+d_s}
   \left[\sqrt{\Big(1-\frac{3p}{4}\Big)^2 +3\Big(\frac{p}{4}\Big)^2}\right]^n 
\end{align}
For small $p$, expanding the square root gives
\begin{align}
    \sqrt{\Big(1-\frac{3p}{4}\Big)^2 +3\Big(\frac{p}{4}\Big)^2}\approx 1-\frac{3p}{4} + \frac{3p^2}{32} +\mathcal{O}(p^3)\,,
\end{align}
so that in the large-$n$ limit the upper bound becomes
\begin{align}
&\frac{1}{1+d_s} + \frac{d_s}{1+d_s}\left(1-\frac{3p}{4}\right)^n
+ \mathcal{O}(p^2)
\nonumber\\
=& \left(1-\frac{3p}{4}\right)^n + \mathcal{O}(d_s^{-1}),
\end{align}
and the two bounds coincide asymptotically. To leading order in $p$ and
for large $n$,
\begin{align}
    F(\mu_\mathcal{S}) \approx \left(1-\frac{3p}{4}\right)^n
    \approx F(\mu_\mathcal{H}),
\end{align}
where the second approximation holds because the upper bound on
$F(\mu_\mathcal{S})$ 
{exceeds} $F(\mu_\mathcal{H})$ only by subleading
corrections.
Thus, at least asymptotically in $n$ and for small $p$, the fidelity of a
typical state is approximately equal to that of a scar subspace state, the
latter being efficiently measurable. Away from this regime, however, 
particularly for larger $p$, the bound becomes less useful since the
upper bound on $F(\mu_\mathcal{S})$ \textit{exceeds} $F(\mu_\mathcal{H})$. Hence, we now construct bounds that are tight away from the small $p$ regime. 

\subsubsection{Bound tight near $p = 1$}
To extend our result beyond the large $n$ and small $p$ regime, we proceed by finding tighter lower and upper bounds for $|\text{Tr}(\Pi_\mathcal{S}W)|$, where $W$ is a Pauli matrix, near $p=1$ (more precisely when $p  \sim \mathcal{O}(1)$). First, note that this trace is non-vanishing only if $W$ is a tensor product of the stabilizer generators defining the QMBS subspace, i.e., if $W\in \{ \mathbb{I},X_{i}X_{i+L},Y_{i}Y_{i+L},Z_{i}Z_{i+L}\}^{\otimes L}$ and for these the trace can be estimated. Specifically, we use the representation
\begin{align}\label{eq:repXXZZ}
    W=\prod_{i \in S_x \subset [L]} X_i X_{i+L}\prod_{j \in S_z \subset [L]}Z_j Z_{j+L}
\end{align}
where $[L]:= \{ 1,\dots,L\}$ denotes the set of indices of sites in the lattice (see \cref{fig:model}). The scar subspace projector can be written in terms of the basis states in \Eq{eq:ss1} of the main text, 
\begin{align}
    \Pi_\mathcal{S}=\sum_{\alpha=1}^4\sum_{k=1}^{L} | \varphi_{\alpha,k}\rangle \langle \varphi_{\alpha,k} |\,,
\end{align}
so that 
\begin{align}\label{eq:trdetail}
    \text{Tr}(\Pi_\mathcal{S}W) = \sum_{\alpha,k} \prod_{i\in S_x}\prod_{j\in S_z} \langle\varphi_{\alpha,k}| X_i X_{i+L}Z_j Z_{j+L} |\varphi_{\alpha,k}\rangle\,,
\end{align}
where we have used that if $| \psi \rangle$ is an eigenstate of operators $A,B$, then $\langle \psi | AB | \psi \rangle = \langle \psi | A | \psi \rangle \langle \psi |B| \psi \rangle $. To compute \Eq{eq:trdetail}, the key step is to rewrite the sum over all basis states in the subspace as a sum over Pauli operators. In particular, note that the expectation value of an operator labeled by $1 \le k \le L$ can be mapped to the expectation value of a corresponding operator with respect to a reference state labeled $k = 1$, e.g.,
\begin{align}
  &\mel{\varphi_{\alpha,k}}{X_i X_{i+L}\, Z_j Z_{j+L}}{\varphi_{\alpha,k}} \notag \\
  &\;= \mel{\varphi_{\alpha,1}}{X_{i-k+1} X_{i+L-k+1}\, Z_{j-k+1} Z_{j+L-k+1}}{\varphi_{\alpha,1}}\,.
\end{align}
Using the above equation and defining the cyclic permutation (in $k$) of an operator,
\begin{align}
    \label{eq:P_W}
    \mathbb{P}(W) \equiv \sum_{k'=0}^{L-1}\prod_{i\in S_x} X_{i-k'}X_{i+L-k'}\prod_{j\in S_z}Z_{j-k'}Z_{j+L-k'}\,,
\end{align}
which allows to write \Eq{eq:trdetail} as
\begin{align}
    \text{Tr}(\Pi_\mathcal{S}W) =  \sum_{\alpha=1}^4 \langle \varphi_{\alpha}| \mathbb{P}(W) | \varphi_{\alpha}\rangle\,,
\end{align}
where, from now on, we abbreviate $| \varphi_\alpha \rangle\equiv | \varphi_{\alpha,1}\rangle$
Note that the following relations hold,
\begin{subequations}
\begin{align}
|\varphi_{2} \rangle &= X_1 X_2 \dots X_{L} |\varphi_{1} \rangle \\
|\varphi_{3} \rangle &= \;\quad X_2 \dots X_{L} |\varphi_{1} \rangle\\
|\varphi_{4} \rangle &= X_1 \qquad\qquad\; |\varphi_{1} \rangle
\end{align}
\end{subequations}
which allow us to write
\begin{align}\label{eq:traceraw}
    \text{Tr}(\Pi_\mathcal{S}W) = \langle \varphi_1 | \big[ X_1 \dots X_{L}\mathbb{P}(W)X_1 \dots X_{L}  \nonumber\\+ X_2 \dots X_{L}\mathbb{P}(W)X_2 \dots X_L\nonumber\\
    +  X_1 \mathbb{P}(W)X_1+ \mathbb{P}(W) \big]| \varphi_1 \rangle\,.
\end{align}
From this expression, we now show that the traces for different operators $W$ can be readily estimated by distinguishing three cases: (i) $W$ contains no $Z_i Z_{i+L}$ terms, (ii) $W$ contains an odd number of $Z_i Z_{i+L}$ terms, and (iii) $W$ contains an even number of $Z_i Z_{i+L}$ terms.

\bigskip
\noindent{(i)} \textbf{$W$ contains no $Z_i Z_{i+L}$ terms.} In this case,
    \begin{align}\label{eq:XXonly}
        W=\prod_{i\in S_x} X_i X_{i+L}
    \end{align}
    which implies that 
    \begin{align}
        &X_1 \dots X_{L}\mathbb{P}(W)X_1 \dots X_{L}  + X_2 \dots X_{L}\mathbb{P}(W)X_2 \dots X_L\nonumber\\
    +  &X_1 \mathbb{P}(W)X_1+ \mathbb{P}(W) = 4 \mathbb{P}(W) \,,
    \end{align}
    and thus 
    \begin{align}
        \text{Tr}(\Pi_\mathcal{S}W) =4 \langle \varphi_1 | \mathbb{P}(W)  | \varphi_1\rangle\,.
    \end{align}
    To compute this note that $\langle \varphi_{1}| X_{i}X_{i+L} | \varphi_1\rangle =-1$ if $i\neq 1$ and $1$ otherwise. Thus for $W$ in \Eq{eq:XXonly}, 
    \begin{align}
        \langle \varphi_{1}| W| \varphi_1\rangle = \begin{cases}
            (-1)^{|S_x|} & \text{if }i\neq 1\,,\\
            -(-1)^{|S_x|}& \text{if } i=1\, ,
        \end{cases}
    \end{align}
    where $|S_x|$ is the cardinality of the set $S_x$. Since $\mathbb{P}(W) $ contains all permutations in $k$ of $W$, it is a sum of $L$ terms of which $|S_x|$ contain $X_1 X_{1+L}$ and $L-|S_x|$ do not.  Consequently,
    \begin{align}
        \text{Tr}(\Pi_\mathcal{S}W) &=4(L-|S_x|)(-1)^{|S_x|} - 4|S_x|(-1)^{|S_x|}    \nonumber\\
&=4(-1)^{|S_x|}(L-2|S_x|)\,,
\end{align}
    for operators of the form \Eq{eq:XXonly}.

\bigskip
\noindent{(ii)} \textbf{$W$  contains an odd number of $Z_i Z_{i+L}$ terms.}  In this case $W$ is of the form \Eq{eq:repXXZZ}, but $|S_z|=2m+1$ is odd. The terms in the cyclic permutation, $\mathbb{P}(W)$, now contain $Z_i Z_{i+L}$, and they can be grouped into two kinds: We label with $(1,\bar{1})$ any $W$ which contain $Z_1 Z_{1+L} $ as well as an even number of $Z_i Z_{i+L}$ for $i \in \{ 2, \dots, L\}$. We label with $(\bar{1},\bar{1})$ terms that contain no $Z_1 Z_{1+L} $ but an odd number of $Z_i Z_{i+L}$ for $i \in \{ 2, \dots, L\}$.

Using the fact that $X$ and $Z$ anti-commute, no matter the $X_i X_{i+L}$ content of $W$, we can tabulate the sign of the terms in Eq. \eqref{eq:P_W} (to calculate Eq. \eqref{eq:traceraw}) for the two types below
\begin{center}
\begin{tabular}{c|c|c}
  Operator & Type $(1,\bar{1})$ & Type $(\bar{1},\bar{1})$ \\
  \hline
  $\mathbb{P}(W)$ & $+$ & $+$ \\
  $X_1 \mathbb{P}(W) X_1$ & $-$ & $+$ \\
  $X_{1\cdots L}\,\mathbb{P}(W)\,X_{1\cdots L}$ & $-$ & $-$ \\
  $X_{2\cdots L}\,\mathbb{P}(W)\,X_{2\cdots L}$ & $+$ & $-$ \\
  \hline
  & $= 0$ & $= 0$ \\
\end{tabular}
\end{center}
indicating that, for given $W$, these terms cancel in \Eq{eq:traceraw} and can therefore be neglected.

\bigskip
\noindent{(iii)} \textbf{$W$  contains an even number of $Z_i Z_{i+L}$ terms.} In this case, a similar table can be constructed,
\begin{center}
\begin{tabular}{c|c|c}
  Operator & Type $(1,\bar{1})$ & Type $(\bar{1},\bar{1})$ \\
  \hline
  $\mathbb{P}(W)$ & $+$ & $+$ \\
  $X_1 \mathbb{P}(W) X_1$ & $-$ & $+$ \\
  $X_{1\cdots L}\,\mathbb{P}(W)\,X_{1\cdots L}$ & $+$ & $+$ \\
  $X_{2\cdots L}\,\mathbb{P}(W)\,X_{2\cdots L}$ & $-$ & $+$ \\
\end{tabular}
\end{center}
but now only terms of type $(1,\bar{1})$ are canceled by the symmetrization.
Since Pauli expectation values are bounded in magnitude by $1$, for operators
containing an even number of $Z_i Z_{i+L}$ terms, the trace satisfies
\begin{equation}
  \Tr(\Pi_\mathcal{S} W) \leq 4(L - |S_z|).
\end{equation}
To obtain a lower bound, we note that since $L$ is always odd in our model, the
total number of surviving Pauli terms is odd. As each such term contributes
$\pm 1$, the sum cannot vanish, and therefore
\begin{equation}
  1 \leq \bigl(\Tr(\Pi_\mathcal{S} W)\bigr)^2.
\end{equation}
Indeed, in the extreme case all but one term cancels, leaving a net contribution
of $\pm 1$. 

Thus in summary, we get the exact expression
\begin{align}
    \text{Tr}(\Pi_\mathcal{S}W) =\begin{cases}
        4(L-2|S_x|)(-1)^{|S_x|} & \text{if }|S_z|=0\\
        0 & \text{if } |S_z| \text{ odd}\,,
    \end{cases}
\end{align}
and, if $|S_z|\neq 0$ and $|S_z|$ even, we obtain the following bounds, irrespective of $S_x$,
\begin{align}
    1 \le  \text{Tr}(\Pi_\mathcal{S}W)^2 \le 16(L-|S_z|)^2\,.
\end{align}
Thus \Eq{eq:firstterm2} can be expressed,
\begin{align}
     & \sum_{i=1}^{d^2} \text{Tr}(K_i^s K_i^{\dagger s})
  = \frac{1}{d}\sum_{i=1}^ {d^2}(1-p)^{w(W_i)}\big[\text{Tr}(\Pi_\mathcal{S} W_i)\big]^2 \nonumber\\
  &=  \frac{1}{d}\sum_{S_x \subseteq [L]}\sum_{S_z \subseteq [L]}
    (1-p)^{w[W_{S_x,S_z}]}
    \Bigl[\text{Tr}\big(\Pi_\mathcal{S} W_{S_x,S_z}\big)\Bigr]^2,\nonumber \\
    &=\frac{1}{d}\sum_{S_x \subseteq [L]}\sum_{\substack{S_z \subseteq [L]\nonumber\\ |S_z|=\text{even}}}
    (1-p)^{2(|S_x|+|S_z|)}
    \Bigl[\text{Tr}\big(\Pi_\mathcal{S} W_{S_x,S_z}\big)\Big]^2, \nonumber\\
    &=\frac{1}{d}\sum_{\substack{S_x \subseteq [L]\\ S_z = \emptyset}}
      (1-p)^{2|S_x|}
      \Bigl[4(L - 2|S_x|)\Bigr]^2
    \nonumber\\& +\frac{1}{d} \sum_{\substack{S_x, S_z \subseteq [L]\\  |S_z|\neq 0,  \\ |S_z|=\text{even}}}
      (1-p)^{2|S_x|+2|S_z|}
      \text{Tr}(W_{S_x S_z}\,\Pi_\mathcal{S})^2
\end{align}
where we have now written $ W_{S_x,S_z}$ to make the dependence clear. There are $\binom{L}{|S_x|}$ terms that contain $X_i X_{i+L}$ $|S_x|$ times, and similarly for $S_z$. Hence, we arrive at the following upper bound
\begin{align}\label{eq:firsttermupper}
    \sum_{i=1}^{d^2} \text{Tr}(K_i^s K_i^{\dagger s}) \le \frac{16}{d} \Bigg[\sum_{|S_x|=0}^{L}(1-p)^{2|S_x|}\binom{L}{|S_x|}(L-2|S_x|)^2
 \nonumber\\
  +\big[1+(1-p)^2\big]^L
    \sum_{|S_z|=2, \text{even}}^{L-1} (1-p)^{2 |S_z|}\binom{L}{|S_z|}(L-|S_z|)^2 \Bigg]
\end{align}
where we used  the binomial identity 
\begin{align}
    \sum_{|S_x|=0}^L(1-p)^{2|S_x|}\binom{L}{|S_x|}=[1+(1-p)^2]^L\,.
\end{align}
\Eq{eq:firsttermupper} can be evaluated using \textit{Mathematica} yielding the closed form

\begin{align}
     \label{Eq:FTUB}
     \sum_{i=1}^{d^2} \text{Tr}(K_i^s K_i^{\dagger s})  &\leq  \frac{8}{d} L t^{2\!L-2}  \Bigg[\! (L\!+t\!-\!1\!) + (L\!-\!t\!+1)y^{L\!-\!2}  \nonumber \\
     &~~+ \frac{2 (L-1)(y^2 -1)}{t^{L -2}} \Bigg],
\end{align}
 where $t = 1 + (1 - p)^2$ and $y = (2 - t)/t$.
Similarly, the lower bound can be evaluated
\begin{align}
    \label{Eq:FTLB}
    \sum_{i=1}^{d^2} \text{Tr}(K_i^s K_i^{\dagger s}) &\ge \frac{8}{d} t^{2L} \Bigg[1+y^L +\frac{2 (L - 1) \left(1 + Ly^2\right)}{t^L}\Bigg],
\end{align}
Because we only bound the traces of terms where $|S_z|\neq 0$ and $|S_z|$ even, the bounds derived above in \Eqs{Eq:FTUB}{Eq:FTLB} are looser than the ones derived in the previous subsection in \Eqs{Eq:Weak_FTUB}{Eq:Weak_FTLB} near $p = 0$. However, because we explicitly track the weights $(1 - p)^{w[W_i]}$,  the bounds in \Eqs{Eq:FTUB}{Eq:FTLB} become tighter for larger values of $p$. Since both sets of bounds hold for all $0 \le p \le 1$, we obtain uniformly tighter bounds by taking their intersection.  

The term in Eq.\eqref{eq:second_scar_term} can be expressed as, 
\begin{align}
    &\sum_{i=1}^{d^2}|\text{Tr}(K_i^\mathcal{S})|^2\! =\! \sum_{i=1}^{d^2}\!\Big( 1-\frac{3p}{4}\Big)^{2L-w[W_i]}\Big(\frac{p}{4}\Big)^{w[W_i]}\!\text{Tr}(W_i\Pi_\mathcal{S})|^2, \nonumber \\
    &=\!\sum_{\substack{Sx \subseteq [L] \\ S_z = \emptyset}}\! \left(1 - \frac{3p}{4}\right)^{2(L -|Sx|)} \!\left(\frac{p}{4}\right)^{\!2|Sx|} \left| \operatorname{Tr}\left( \Pi_\mathcal{S}W_{S_x \emptyset} \right) \right|^2 \nonumber \\
    &+\!\sum_{S_x \subseteq [L]} \!\sum_{\substack{Sz \subseteq [L] \\ |S_z| = \text{even} \\ |S_z| \neq 0 }} \! \left(1 - \frac{3p}{4}\right)^{2(L -|Sx| - |Sz|)}\! \left(\frac{p}{4}\right)^{2|Sx| + 2|Sz|} \nonumber\\&\qquad\qquad\qquad\qquad\times \left| \operatorname{Tr}\left( \Pi_\mathcal{S}W_{S_x S_z} \right) \right|^2.
\end{align}
Using the same lines of reasoning as above, we arrive at the following upperbound, 
\begin{align}
    &\sum_{i=1}^{d^2}|\text{Tr}(K_i^\mathcal{S})|^2 \le 16 \sum_{|S_x| = 0}^L \left(1 - \frac{3p}{4}\right)^{2(L -|Sx|)} \left(\frac{p}{4}\right)^{2|Sx|}\nonumber\\&\times \left(L - 2 |S_x|\right)^2 
    +16 \left[\left(1 - \frac{3p}{4}\right)^2 + \left(\frac{p}{4}\right)^2 \right]^L \nonumber \\ 
    & \times \!\left( \!\sum_{|S_z| = 2, \text{even}}^{L}\!{\binom{L}{|S_z|}}\! \left(1 - \frac{3p}{4}\right)^{-2|S_z|} \!\left(\frac{p}{4}\right)^{2|S_z|}\! \left(L\! -\! |S_z| \right)^2\right), 
\end{align}
which when evaluated using \textit{Mathematica} gives, 
\begin{align}
   \label{Eq:STUB}
    \sum_{i=1}^{d^2}|\text{Tr}(K_i^\mathcal{S})|^2 &\le 8 L r^{L}  \Bigg[ \frac{8(1 - L)(1 - w)}{w^2} +(L+w - 1)w^{L-2}  \nonumber \\
    &~~~~~~~~~~~+ (L+ 1 - w)(2 - w)^{L - 2} \Bigg],
\end{align}
where,  $r = 1 - \tfrac{3p}{2} + \tfrac{5p^2}{8}$ and $w = 1 + (\tfrac{p}{4 - 3p})^2$. Similarly the lower bound can be evaluated as, 
\begin{align}
    \label{Eq:STLB}
    \sum_{i=1}^{d^2}|\text{Tr}(K_i^\mathcal{S})|^2 &\ge 8 L r^{L} \Bigg[ w^L\!\! + \!(2 - w)^L\! +\! \frac{8L(L-1)(1 - w)}{w^2}\Bigg].
\end{align}
It is easy to verify that the upper and lower bounds on the second term derived in Eqs.\eqref{Eq:STUB} and \eqref{Eq:STLB} are tighter than the ones derived in Eqs.\eqref{Eq:Weak_STUB} and \eqref{Eq:Weak_STLB}, for all $0 \le p \le 1$, respectively. Thus, putting everything together, the final bounds are
\begin{align}
    &\frac{1}{d_s(d_s + 1)}  \left[\max (\eqref{Eq:Weak_FTLB}, \eqref{Eq:FTLB} ) + \eqref{Eq:STLB}\right] \le  F(\mu_\mathcal{S}) \nonumber \\
    &\leq \frac{1}{d_s(d_s + 1)} \left[\min (\eqref{Eq:Weak_FTUB} , \eqref{Eq:FTUB} ) + \eqref{Eq:STUB}\right]\,.
\end{align}

\subsubsection{Concentration of Haar fidelity}
In the main text, we analyze the average fidelity. Here we provide additional arguments that this estimate holds not only on average but for the overwhelming majority of states. Whether this alone is sufficient for quantum simulation is a separate question, as quantum-simulated states may constitute a highly structured subset of Hilbert space. For the present application to thermalization, however, this assumption is well motivated. In any case, our numerical results provide strong evidence that the fidelity equivalence also holds for specific quantum-simulated states.

Levy's lemma \cite{ledoux2001concentration}, implies that the probability of a state having channel a  fidelity that is far from the average channel fidelity is exponentially small. Specifically, for a system with $n$ qubits, 
 \begin{align}
      &\underset{\substack{\ket{\phi}\sim\mu_H}}{\mathrm{Pr}} \big[ \big| \bra{\phi} \mathcal{E}(\ketbra{\phi}{\phi}) \ket{\phi} - \underset{{ \underset{\sim\mu_H}{\ket{\psi}}}}{\EE}  \left[  \bra{\psi} \mathcal{E}(\ketbra{\psi}{\psi}) \ket{\psi} \right]   \big|  \geq \varepsilon  \big]  \nonumber \\
      &\qquad \qquad \leq 2\exp \left(- \frac{ 2^n \varepsilon^2}{72\pi^3}  \right)
 \end{align}
The use of Levy's lemma requires that the channel fidelity is a Lipschitz continuous function. While this is a well-known fact, we provide an elementary proof for the sake of completeness.  
\begin{lemma}
   \label{Lem:Lip_cont}
    The function $f:\CC^{d} \to \RR $ 
    \begin{equation}
        f(\ket{\psi}) := \bra{\psi} \mathcal{E}(\ketbra{\psi}{\psi}) \ket{\psi},
    \end{equation}
    is Lipschitz continuous, with, 
    \begin{equation}
        |f(\ket{\psi}) - f(\ket{\phi})| \leq 4 \| \ket{\psi} - \ket{\phi} \|_2
    \end{equation}
\end{lemma}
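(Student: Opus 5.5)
The plan is to view $f$ as a quadratic form in the rank-one operator $\ketbra{\psi}{\psi}$ and to use a telescoping (hybrid) argument. Both $\ket{\psi}$ and $\ket{\phi}$ are taken to be unit vectors, since the statement is meant on the unit sphere where Levy's lemma is applied. Write $P=\ketbra{\psi}{\psi}$ and $Q=\ketbra{\phi}{\phi}$. Then $f(\ket{\psi})=\Tr[P\,\mathcal{E}(P)]$ and $f(\ket{\phi})=\Tr[Q\,\mathcal{E}(Q)]$. Adding and subtracting the mixed term $\Tr[Q\,\mathcal{E}(P)]$ gives
\begin{align}
f(\ket{\psi})-f(\ket{\phi}) = \Tr[(P-Q)\,\mathcal{E}(P)] + \Tr[Q\,\mathcal{E}(P-Q)]\,.
\end{align}

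Each term is bounded by H\"older's inequality $|\Tr[AB]|\le \|A\|_1\|B\|_\infty$. For the first term, $\mathcal{E}(P)$ is a density matrix, so $\|\mathcal{E}(P)\|_\infty\le 1$, and the term is at most $\|P-Q\|_1$. For the second term, use $\|Q\|_\infty=1$ together with the contractivity of CPTP maps in trace norm, $\|\mathcal{E}(P-Q)\|_1\le\|P-Q\|_1$. This holds for any channel, and in particular for the local depolarizing channel used here, which is a tensor product of CPTP maps. Hence $|f(\ket{\psi})-f(\ket{\phi})|\le 2\|P-Q\|_1$.

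It remains to control $\|P-Q\|_1$ by the vector distance. Write $P-Q=\ket{\psi}(\bra{\psi}-\bra{\phi})+(\ket{\psi}-\ket{\phi})\bra{\phi}$. Each summand is rank one, with trace norm equal to the product of the vector norms, which here is $\|\ket{\psi}-\ket{\phi}\|_2$. The triangle inequality then gives $\|P-Q\|_1\le 2\|\ket{\psi}-\ket{\phi}\|_2$. Combining the two steps yields the claimed constant $4$.

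There is no real obstacle. The only points needing care are that the bound holds for unit vectors, which suffices for the concentration statement on the sphere, and that $f$ is real, so no phase issues arise. If the statement is read for arbitrary vectors in $\CC^{d}$, the inequality must be restricted to the sphere, since $f$ is quartic and hence not globally Lipschitz.
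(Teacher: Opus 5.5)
Your proof is correct and follows essentially the same route as the paper: the same cross term $\Tr[\ketbra{\phi}{\phi}\,\mathcal{E}(\ketbra{\psi}{\psi})]$, H\"older's inequality, trace-norm contractivity of the channel, and then $\|\ketbra{\psi}{\psi}-\ketbra{\phi}{\phi}\|_1\le 2\|\ket{\psi}-\ket{\phi}\|_2$. The only differences are that you prove this last bound directly via the rank-one splitting, where the paper cites it, and that you make explicit the restriction to unit vectors, which is necessary and which the paper leaves implicit.
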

\textit{Proof.} To prove the above, we first add and subtract the cross term $\operatorname{Tr} \big[\ketbra{\phi}{\phi} \Phi( \ketbra{\psi}{\psi} )]$  to bring the expression to a nice form. Then, we use matrix H\"older inequality, $\operatorname{Tr(A^\dagger B)} \leq \|A\|_\infty \|B\|_1$ to derive an upper bound in terms of Schatten-1 norm (or trace distance),  
\begin{align}
    |f(\ket{\psi})\!& -\! f(\ket{\phi})|  
    \nonumber\\&= \operatorname{Tr} \big[\ketbra{\psi}{\psi} \Phi\left( \ketbra{\psi}{\psi}  \right)\big]\!-\!\operatorname{Tr} \big[\ketbra{\phi}{\phi} \Phi\left( \ketbra{\phi}{\phi}  \right)\big],  \nonumber\\
    &= \operatorname{Tr} \big[(\ketbra{\psi}{\psi} - \ketbra{\phi}{\phi}) \Phi\left( \ketbra{\psi}{\psi}  \right)\big] \nonumber\\
    &\quad - \operatorname{Tr} \big[\ketbra{\phi}{\phi} \Phi\left( \ketbra{\psi}{\psi} - \ketbra{\phi}{\phi}  \right)\big], \nonumber \\
    &\leq \|\ketbra{\psi}{\psi} - \ketbra{\phi}{\phi} \|_1 + \|\Phi\left( \ketbra{\psi}{\psi} - \ketbra{\phi}{\phi} \right)\|_1,  \nonumber\\
    &\leq 2 \|\ketbra{\psi}{\psi} - \ketbra{\phi}{\phi} \|_1,  \nonumber\\
    \label{Apeq:Lip_e}
    &\leq 4 \| \ket{\psi} - \ket{\phi} \|_2.
\end{align}
In the 5-th line, we use the data processing inequality, by noting that the trace distance $2 T(\rho,\sigma) = \| \rho - \sigma \|_1$. Finally, in the 6-th line  we have used the arguments in Ref.~\cite{Mele2024introductiontohaar} (section 9.2, Example 54), that show that $ \|\ketbra{\psi}{\psi} - \ketbra{\phi}{\phi} \|_1 \leq  2 \| \ket{\psi} - \ket{\phi} \|_2$.

\subsection{MCMC Details}\label{App:MCMC}
 In this section of the Supplemental Material, we describe the Markov Chain Monte Carlo (MCMC) sampling procedure used for DFE, \Eq{eq:def_fidelity}, of arbitrary states in the stabilizer scar subspace. Ref.~\cite{flammia2011direct} guarantees that, if the relevant expectation values are known exactly, the number of MCMC samples required is a constant that depends only on the desired precision $\epsilon$ and failure probability $\delta$. If these expectation values are instead estimated experimentally from a finite number of measurement shots, then, following Ref.~\cite{leone2023nonstabilizerness} and the arguments given in the main text, the number of copies of the quantum state required additionally scales polynomially with the number of qubits, i.e., the protocol is efficient.

While Refs.~\cite{flammia2011direct,leone2023nonstabilizerness} establish theoretical guarantees for the DFE procedure, they assume access to i.i.d. samples from the target state Pauli distribution $P_{\rho}(k)$. Generating such samples is not trivial since there are exponentially many Paulis to sample from. In this manuscript, we use a Metropolis-Hastings (MH) algorithm that proposes Pauli operators with non-zero support on $\mathcal{S}$ with \textit{uniform probability}, clearly not an efficient proposal scheme. Alternative schemes will be presented in a forthcoming publication. 
\begin{figure}[t]
    \centering
    \includegraphics[width=0.7\linewidth]{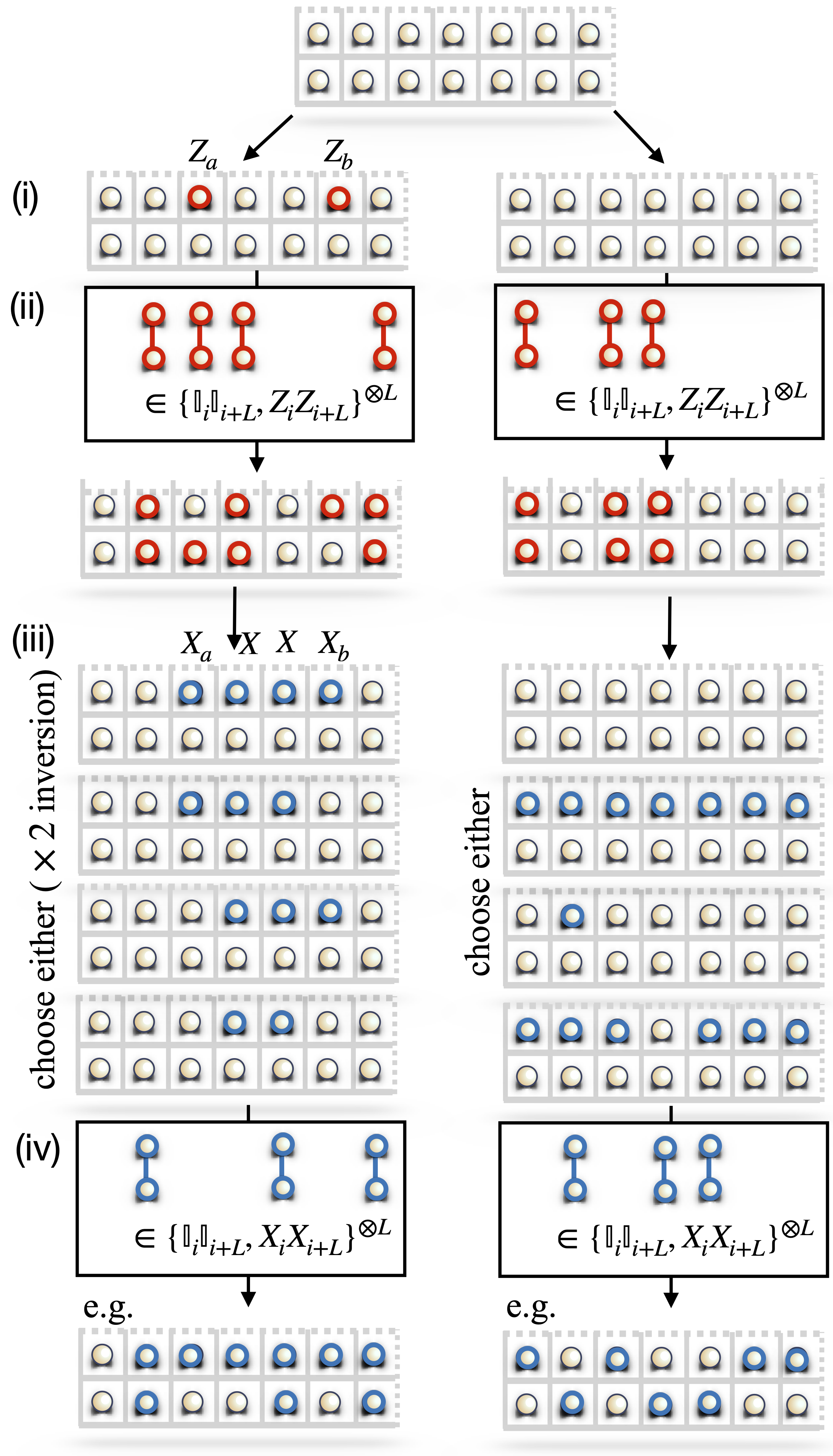}
    \caption{\textit{MCMC suggest-step decision tree.} Shown is the decision tree used to uniformaly generate Pauli operators with nonzero expectation value within the stabilizer scar subspace. This proceeds in four steps: (i) a $Z$ operator pattern is generated, in which either a ``defect hops'' (left column) or does not (right column); (ii) this $Z$ pattern is further dressed by a random product of $\{ \mathbb{I}_{i}\mathbb{I}_{i+L}, Z_{i} Z_{i+L}\}^{\otimes L}$ operators; (iii) based on the resulting $Z$ pattern, an $X$ pattern is generated that either connects the defect's origin and target (left column, not shown are the inverted $X$ patterns in the top row) or merely changes its flavor (right column); and (iv) this $X$ pattern is similarly dressed by a random product of $\{ \mathbb{I}_{i}\mathbb{I}_{i+L}, X_{i} X_{i+L}\}^{\otimes L}$ operators. The corresponding matrix elements, together with the phase obtained from \Eq{eq:formofpaulis}, are tracked throughout to construct a $4L \times 4L$ matrix in the QMBS subspace. }
    \label{fig:decisiontree}
\end{figure}

\subsubsection{Metropolis-Hasting algorithm}
The algorithm proposes only Pauli operators that have non-zero matrix elements within the target scar subspace, i.e., operators drawn from the set 
\begin{equation}
    P_{\mathcal{S}} := \{ W_k  \in \mathcal{P}_n\, | \, \exists \rho \in \mathcal{S} \text { with } \text{Tr}[\rho W_k] \neq 0  \}
\end{equation}
For the prototype model of \Eqs{eq:Hamiltonian}{eq:ss1}, as we show in Eq.~\eqref{eq:countingPS} below, this set contains $|P_{\mathcal{S}}| = (n^2 - n + 2)d$ operators,  still exponentially many in $n$, but a small fraction of the full set of $d^2$ Pauli operators. 

To ensure that only Pauli operators from the set $P_{\mathcal{S}}$ are proposed, we use a decision tree, which is summarized compactly in \Fig{fig:decisiontree}. We seek Pauli operators with non-zero matrix elements between pairs of scar basis states. It is useful to characterize each basis state in terms of a ``defect'' located at site $k=1,\dots,L$ with one of four flavors $a=1,2,3,4$, corresponding to the four panels of \Fig{fig:model}(b). The allowed Pauli operators then either move the defect and/or change its flavor, act as a diagonal operator. They are parameterized as
\begin{align}\label{eq:formofpaulis}
    W = (-i)^{\mathbf{a}\cdot \mathbf{b}}X^{a_1}\dots X^{a_{2L}} Z^{b_1}\dots Z^{b_{2L}}\,,
\end{align}
where $\mathbf{a},\mathbf{b} \subset \{0,1\}^{2L}$. In the first step, \Fig{fig:decisiontree}(i), the pattern of $Z$ operators is determined: with probability $p=\frac{2+2L}{2+2L+4L(L-1)}$ no ``single'' $Z$ operators are generated, while with probability $1-p$ two ``single'' $Z$ operators are placed at two randomly chosen, distinct locations $i \in \{1,\dots,L\}$ (the ``upper row'' of the $L\times 2$ lattice in \Fig{fig:model}(b)), corresponding to a ``hopping'' of a defect. The operator $W$ is then ``dressed'' by multiplication with a random product of $\{ \mathbb{I}_{i}\mathbb{I}_{i+L}, Z_{i} Z_{i+L}\}^{\otimes L}$, where the number $k\in\{0,\dots,L\}$ of $Z_{i}Z_{i+L}$ stabilizer factors included is drawn from a binomial distribution, i.e., there are $\binom{L}{k}$ ways of placing $k$ such stabilizer operators, see \Fig{fig:decisiontree}(ii).

This is followed by the placement of $X$ operators in \Fig{fig:decisiontree}(iii): If no single $Z$ operator was placed in the first step (right column),  then the defect does not hop, but its flavor may still change. In this case, one of four $X$ patterns is chosen:
\begin{subequations}
\begin{align}
    \mathbb{I} &\qquad \text{with probability } \frac{1}{2L+2}, \\
    \prod_{i=1}^{L} X_i &\qquad \text{with probability } \frac{1}{2L+2}, \\
    X_k &\qquad \text{with probability } \frac{L}{2L+2}, \\
    \prod_{\substack{i=1 \\ i\neq k}}^{L} X_i &\qquad \text{with probability } \frac{L}{2L+2},
\end{align}
\end{subequations}
where, in the last two cases, the location $k=1,\dots,L$ is chosen with equal probability. If, on the other hand, two ``single'' $Z$ operators were placed in the first step (left column), the $X$ pattern is largely fixed: it must accommodate the ``hopping'' of the defect, although the flavor may still change in this process. In general, this requires placing an $X$ string connecting the two ``single'' $Z$ locations, which correspond to the origin and target of the defect hop. Owing to the periodic boundary conditions, this can be done in two ways; in addition, the $X$ string may or may not include either of the two ``single'' $Z$ locations, with each choice corresponding to a different combination of origin/target flavor change. This yields $8$ possibilities in total. Finally in \Fig{fig:decisiontree}(iv), as before, the resulting operator is ``dressed'' by multiplication with a random product of $\{ \mathbb{I}_{i}\mathbb{I}_{i+L}, X_{i}X_{i+L}\}^{\otimes L}$.
As the decision tree is traversed, the matrix elements between (sets of) basis states are tracked, the phase appearing in \Eq{eq:formofpaulis} is computed, and a corresponding $4L \times 4L$ matrix is constructed, corresponding to the projection of $W$ into the scar subspace.

This prescription allows one to count the number of nonzero Pauli observables supported on the subspace by summing over all branches of the decision tree in \Fig{fig:decisiontree}. There are $\binom{L}{2}=L(L-1)/2$ ways of placing the single-$Z$ operators, each admitting $8$ corresponding $X$-operator patterns (left column of the figure), together with $2+2L$ configurations containing only flavor-changing $X$ patterns. Each such pattern can further be dressed independently by $2^L$ products from $\{\mathbb{I}_{i}\mathbb{I}_{i+L},\,X_iX_{i+L}\}^{\otimes L}$ and $2^L$ products from $\{\mathbb{I}_{i}\mathbb{I}_{i+L},\,Z_iZ_{i+L}\}^{\otimes L}$, yielding
\begin{align}\label{eq:countingPS}
|P_{\mathcal{S}}|
&=2^{2L}\left(8\binom{L}{2}+2+2L\right)
=d\left(n^2-n+2\right).
\end{align}

The proposed $t$-th sample $k_{t+1}^{\rm prop.}$ is accepted or rejected based on the ratio~\cite{brooks2011handbook}
\begin{align}
    \alpha = \frac{P_{\rho}(k^{\rm prop.}_{t+1})}{P_{\rho}(k_{t})}\,,\quad  P_{\rho}(k)=\chi_{\rho}^2(k)= \frac{\text{Tr}[\rho P_{k}]^2}{{d}} 
\end{align}
where $k_t$ is the current sample in the MCMC. Drawing a uniform random number $u\in[0,1]$, the proposal is accepted, $k_{t+1}=k_{t+1}^{\rm prop.}$, if $\alpha > u$, and rejected, $k_{t+1}=k_{t}$, otherwise (i.e., if $\alpha \le u$).

\subsubsection{I.i.d. MCMC samples}
As is inherent to MCMC methods, successive samples are generally autocorrelated because each sample is generated from the previous one \cite{robert2009metropolis}. To quantify this effect, we compute the Pearson correlation coefficient of the sampled estimators $E_\tau\equiv \chi_\sigma(k_\tau)/\chi_\rho(k_\tau)$. Specifically, the autocorrelation at lag $\tau$ is defined as the ratio of the covariance between samples separated by $\tau$ to the variance of the series, 
\begin{align}
        C_\tau \equiv \frac{\text{Cov}(E_t,E_{t+\tau})}{\text{Var}(E_t)}=\frac{\sum_{t=1}^{N-\tau} (E_t - \bar{E})(E_{t+\tau}  - \bar{E})}{\sum_{t=1}^{N} (E_t - \bar{E})^2}\,,
    \label{eq:autocorrelation}
\end{align}
where $\bar{E}$ is the sample mean and $N$ is the total number of samples in the chain. The autocorrelation determines the effective sample size $N_{\text{eff}}$, which is generally smaller than the total number of samples $N$, thereby determining how fast the sample mean converges to the true value. For our numerical simulations we discard the first $1000$ samples as a burn-in period and thin the chain, keeping only every $100$-th sample thereafter. We find that the auto-correlation factor $C_{100} \approx0$, indicating a negligible residual autocorrelation. Consequently,  we treat our samples as being generated independently, i.e, $N_{\text{eff}} \approx N$. 

When the autocorrelation cannot be made negligible through thinning, the effective sample size can be estimated analytically. Concretely, consider the random variable $\bar{E} = \frac{1}{N} \sum_i E_i$, where $E_i$ is the random variable associated with the MCMC sampling, then
\begin{align}
    \text{Var}(\bar{E})= \frac{\text{Var}(E_t)}{N^2}\Big( N + 2 \sum_{t=1}^N \sum_{\tau=1}^{N-t}C_\tau\Big)\,,
\end{align}
which recovers the familiar $1/N$ scaling if the second term vanishes. One can write
\begin{align}
    \sum_{t=1}^N \sum_{\tau=1}^{N-t}C_\tau = \sum_{\tau=1}^{N} \sum_{t=1}^{N-\tau} C_\tau  = \sum_{\tau=1}^N (N-\tau) C_\tau\,,
\end{align}
and, parameterizing $C_\tau\equiv\epsilon_{\rm acf}e^{-k (\tau - 1)}$ where $\epsilon_{\rm acf}= C_1$, and $k$ a positive constant which can be extracted from numerics, one gets
\begin{align}
    \text{Var}(\bar{E})= \frac{\text{Var}(E_t)}{N} \Big(1 + \frac{2 \epsilon_{\rm acf}}{N}\sum_{\tau=1}^N (N-\tau) e^{-k\tau} \Big)\,.
\end{align}

\begin{figure*}[t]
    \centering
    \includegraphics[width=1.0\linewidth]{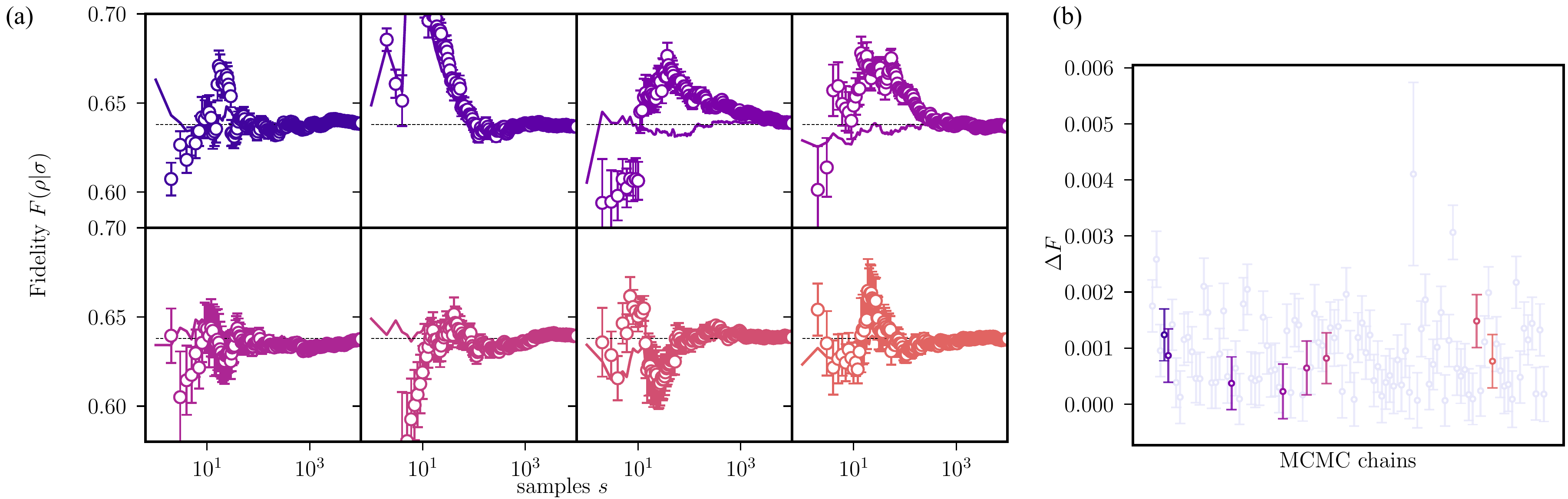}
    \caption{\textit{Additional noisy circuit simulation with shots.} 
(a) Results for $n=10$ qubits, $\ell=15$ circuit layers, and depolarizing noise strength $p=0.001$, analogous to \Fig{fig:circuit}(c) of the main text. Eight different randomly selected MCMC chains are shown out of $100$ generated. Circles denote shot-sampled estimates ($10^4$ shots per Pauli observable), while solid lines show the corresponding exact Pauli expectation values. Some shot-sampled trajectories exhibit an apparent systematic offset from the exact curves. This arises from the construction of the statistical uncertainty: the error bars represent the accumulated uncertainty of the fidelity estimate after $s$ MCMC samples, obtained via error propagation. As a result, an early (or intermediate) statistical outlier in a Pauli expectation value can bias the cumulative estimate until its effect is gradually averaged out by subsequent samples.
(b) Same  as the bottom panel of \Fig{fig:circuit}(c) of the main text, showing the final fidelity precision after $s=10^4$ MCMC samples.
    \label{fig:differntMCMCruns}}
\end{figure*}

\subsubsection{Simulating an experiment with shots}
To demonstrate that the number of shots required to achieve the desired precision falls below the bound in Eq.~\eqref{eq:leonebound}  we classically emulate the shot-based measurements of expectation values for a test state $\sigma$ and target state $ \rho  \in \mathcal{S}$. Recall that $\rho$ is the ideal state for which expectation values are known exactly, while  $\sigma$ is the state that is estimated in experiment with a-priori unknown expectation values. The following analysis requires knowledge of the exact expectation value $\operatorname{Tr}(W_i \sigma)$, which is inaccessible in practice; it can be understood, in a sense, as the ideal (i.e. lower bound) number of shots, while the results in \Eq{eq:leonebound} of the main text give a  worse-case guarantee (upper bound).  Given the expectation
value $\langle W_{k_i} \rangle_\sigma \equiv \mathrm{Tr}[\sigma W_{k_i}]$, we define the outcome probabilities
\begin{align}\label{eq:binomdistr}
    p^{(i)}_{\pm} = \frac{1 \pm \langle W_{k_i} \rangle_\sigma}{2}\,,
\end{align}
which satisfy $p^{(i)}_+ + p^{(i)}_- = 1$. Sampling from this Bernoulli
distribution yields measurement outcomes corresponding to projective measurements
of $W_{k_i}$.

To determine the number of shots $m_i$ required to achieve a prescribed precision
and confidence, we demand
\begin{align}\label{eq:reqindivudual}
    \text{Pr}\!\left(|\hat{Y} - Y| \ge \epsilon\right) \le \delta\,,
\end{align}
where $Y = \frac{1}{s}\sum_{i=1}^{s} X_i$ and
$\hat{Y} = \frac{1}{s}\sum_{i=1}^{s} \hat{X}_i$ are the estimators built from
exact and shot-sampled expectation values, respectively, and $X_i$, $\hat{X}_i$
are defined below \Eq{eq:def_fidelity}. Suppose that for each Pauli
sample $i$ we have $ \text{Pr}(|\hat{X}_i - X_i| \ge \epsilon_i) \le \delta_i$.
Then, by the triangle inequality and independence,
\begin{align}
    \text{Pr}\!\left(\frac{1}{s}\sum_{i=1}^{s}|\hat{X}_i - X_i|
    \le \frac{1}{s}\sum_{i=1}^{s}\epsilon_i\right)
    \ge \prod_{i=1}^{s}(1 - \delta_i)\,.
\end{align}
Hence, Eq.~\eqref{eq:reqindivudual} is satisfied provided each term obeys
\begin{align}\label{eq:individual_bound}
     \text{Pr}\!\left(|\hat{X}_i - X_i| \ge \epsilon\right)
    \le 1 - (1-\delta)^{1/s}\,.
\end{align}
Following Ref.~\cite{flammia2011direct}, the shot-based estimator reads
\begin{align}
    \hat{X}_i = \frac{1}{m_i\, \langle W_{k_i} \rangle_\rho }
    \sum_{j=1}^{m_i} A_{ij}\,,
\end{align}
where $A_{ij} \in \{+1,-1\}$ are i.i.d.\ draws from the distribution in
Eq.~\eqref{eq:binomdistr} and $\langle W_{k_i} \rangle_\rho \equiv \operatorname{Tr} (W_{k_i} \rho)$. The failure probability for sample $i$ is therefore
\begin{align}\label{eq:shotsrequired}
    & \text{Pr}\left(|\hat{X}_i - X_i| \ge \epsilon\right)
    =  \text{Pr}\Big(\big|\langle W_{k_i} \rangle_\sigma
      - \frac{1}{m_i}\sum_{j=1}^{m_i} A_{ij}\big|
      \ge \epsilon\,\langle W_{k_i} \rangle_\rho \Big) \nonumber\\
    &= F\left(
         \left\lfloor m_i\!\left(p^{(i)}_+ - \tfrac{\epsilon}{2}
         \langle W_{k_i} \rangle_\rho\right)\right\rfloor;
         \,m_i,\,p^{(i)}_+\right) \nonumber\\
    &\quad + 1 - F\left(\left\lceil m_i\!\left(p^{(i)}_+
         + \tfrac{\epsilon}{2}\langle W_{k_i} \rangle_\rho \right)\right\rceil;
         \,m_i,\,p^{(i)}_+\right)\,,
\end{align}
where the binomial cumulative distribution function is,
\begin{align}
    F(a;\,m,p) = \sum_{n=0}^{a}\binom{m}{n}p^{n}(1-p)^{m-n}\,.
\end{align}
While the expression in Eq.~\eqref{eq:shotsrequired} is not strictly monotonic in $n$, it is monotonic in  aggregate. Hence, the required number of shots $m_i$ for each expectation value
$\langle W_{k_i}\rangle$ to achieve precision $\epsilon$ and failure probability
$\delta$ is then obtained by finding the smallest $m_i$ such that
moving average of Eq.~\eqref{eq:shotsrequired} does not exceed the right-hand side of
Eq.~\eqref{eq:individual_bound}.

\begin{figure*}[t]
    \centering
    \includegraphics[width=0.79\linewidth]{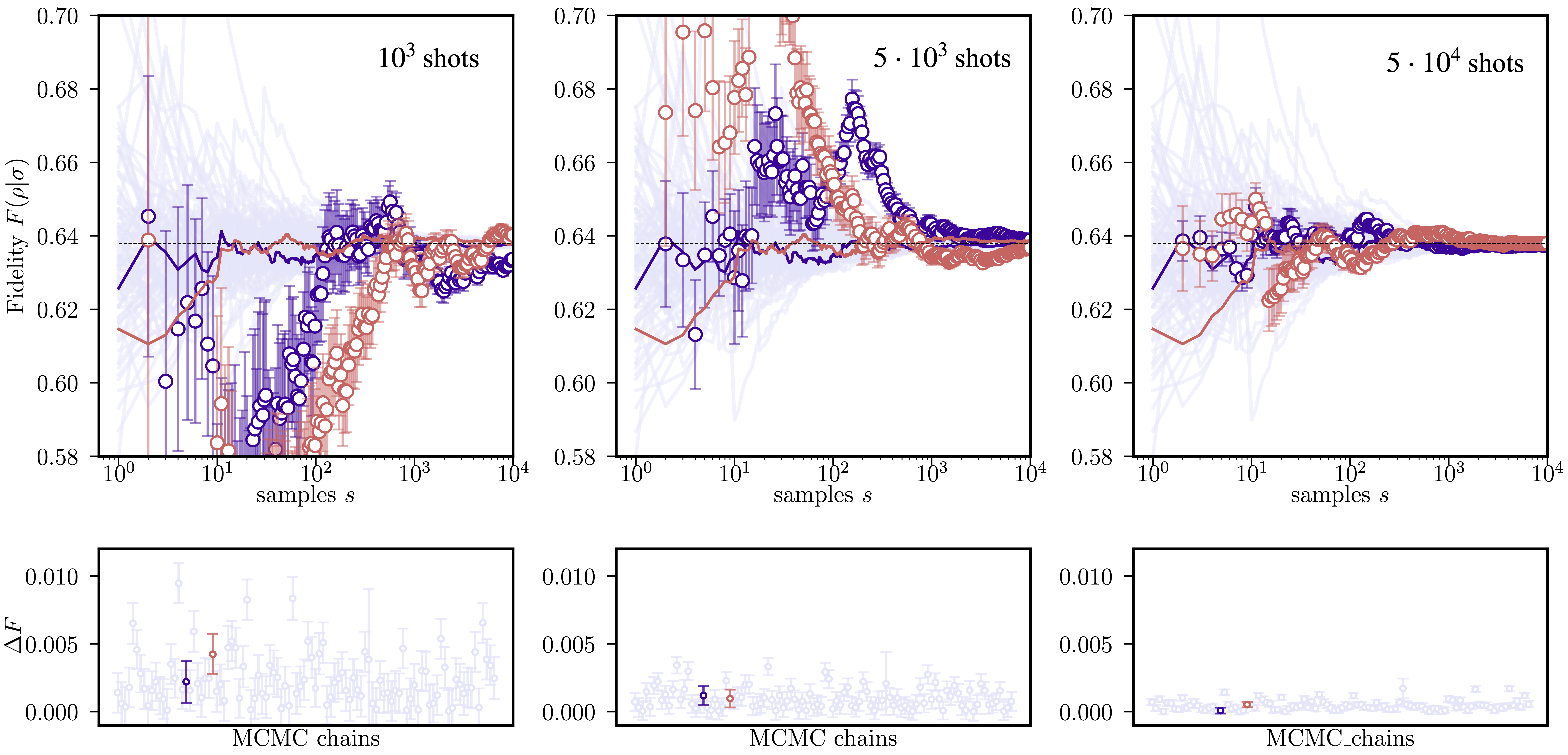}
\caption{\textit{Shot dependence.} The same MCMC chain as in \Fig{fig:MCMC}(c) of the main text applied to noisy circuit simulations with $n=10$ qubits, $\ell=15$ layers, and depolarizing noise strength $p=10^{-3}$. The number of shots per Pauli observable is varied between $10^{3}$ (10\% of the shots used in the main text, left panel), $5\cdot 10^{3}$ (50\%, middle panel), and $5\cdot 10^{4}$ (500\%, right panel). The bottom panels show the resulting precision after $s=10^4$ MCMC samples.
\label{fig:differntMCMCshotsrunsshots}}
\end{figure*}

Additional numerical studies complementing \Fig{fig:circuit}(c) of the main text are presented in \Fig{fig:differntMCMCruns} ($n=10$ qubits, $\ell=15$ layers, $p=10^{-3}$). In (a), eight \textit{different} randomly selected MCMC chains are shown, exhibiting behavior consistent with the results presented in the main text. For some chains, an apparent systematic deviation is observed at small $s$ between the estimates obtained from exact Pauli expectation values (solid lines) and those obtained from shot-sampled measurements (circles). This behavior originates from the way the statistical uncertainty is computed: the shown error bars correspond to the accumulated uncertainty of the fidelity estimate after $s$ MCMC samples, represented using the sample variance obtained via error propagation, and uncertainties at successive data points are not independent. Consequently, an atypical shot-based estimate, i.e., an outlier, at an early stage of the sampling process can bias the cumulative estimate for a number of subsequent MCMC samples, with the resulting deviation persisting until it is averaged out by additional samples. Ultimately, all trajectories converge, achieving the improved precision and failure probability that we observe is significantly better than the bounds derived in the main text. 

\Fig{fig:differntMCMCshotsrunsshots} shows the same MCMC chains as in 
\Fig{fig:circuit}(c) of the main text. However, we vary the shots used to evaluate each Pauli from  $10^{3}$ (10\% of the shots used in the main text, left panel), $5\cdot 10^{3}$ (50\%, middle panel), and $5\cdot 10^{4}$ (500\%, right panel). These results are also consistent with the bounds presented in the main text.

\end{document}